\DeclareMathOperator*{\argmax}{arg\,max}
\newtheorem{theo}{Theorem}
\newtheorem{defi}{Definition}
\newtheorem{prop}{Proposition}
\newtheorem{cor}{Corollary}
\newcolumntype{I}{!{\vrule width 1.5pt}}
\newlength\savedwidth
\newcommand{\sinr}{\textrm{SINR}} 
\newcommand{\rt}{\textrm{rate}}     
\newcommand{\tphi}{\tilde{\Phi}}
\newcommand{\tabincell}[2]{\begin{tabular}{@{}#1@{}}#2\end{tabular}}               
\begin{document}

\title{Resource Optimization in Device-to-Device Cellular Systems Using Time-Frequency Hopping}

\author{Qiaoyang Ye, 
        Mazin Al-Shalash,
        Constantine Caramanis
        and Jeffrey G. Andrews
\thanks{Q. Ye, C. Caramanis and  J. G. Andrews are with WNCG, The University of Texas at Austin, USA, M. Shalash is with Huawei Technologies. Email: 
qye@utexas.edu, mshalash@huawei.com, constantine@utexas.edu, jandrews@ece.utexas.edu. A part of this
paper will be presented at IEEE ICC 2014 \cite{YeICC14}.
Manuscript last revised: \today.}}

\maketitle

\begin{abstract}

We develop a flexible and accurate framework for device-to-device (D2D) communication in the context of a  conventional cellular network, which allows for time-frequency resources to be either shared or orthogonally partitioned between the two networks. 
Using stochastic geometry, we provide accurate expressions for SINR distributions and average rates, under an assumption of interference randomization via time and/or frequency hopping, for both dedicated and shared spectrum approaches. We obtain analytical results in closed or semi-closed form in high SNR regime, that allow us to easily explore the impact of key parameters (e.g., the load and hopping probabilities) on the network performance. In particular, unlike other models, the expressions we obtain are tractable, i.e., they can be efficiently optimized without extensive simulation. Using these, we  optimize the hopping probabilities for the D2D links, i.e., how often they should request a time or frequency slot. This can be viewed as an optimized lower bound to other more sophisticated scheduling schemes. We also investigate the optimal resource partitions between D2D and cellular networks when they use orthogonal resources.
\end{abstract}


\IEEEpeerreviewmaketitle

\section{Introduction}
By allowing direct communication between physically proximal devices, device-to-device (D2D) communication can reduce energy consumption, efficiently utilize the network resources, reduce end-to-end latency, and increase the network capacity and flexibility. Consequently, D2D communication is emerging as a potentially important  technology component for LTE-Advanced, aiming to meet the growing demand for local wireless services~\cite{DopRin09,CorLar10,WuTav10,FodDah12}. Unlike general ad hoc networks, D2D can benefit from cellular infrastructure (e.g., network coordinated device discovery, synchronization and enhanced security), and can operate on licensed bands, which makes resource allocation more tractable and reliable. 

In the D2D-enabled cellular network, D2D links can either use orthogonal resources or share resources with the cellular network. In a network with orthogonal allocation -- called a \textit{dedicated network} -- the interference management is simplified, but the resource utilization may be less efficient. On the other hand, if D2D transmissions reuse cellular resources -- called a \textit{shared network} -- network resources can be used more efficiently, at the cost of a denser interference environment, which complicates interference management.  Which is preferable?  Potential D2D data can either be transmitted directly (D2D), or via a base station (BS) -- termed \emph{mode selection}.   When should a potential D2D link transmit directly, versus relaying via the BS? As we explain in detail below, we develop a flexible model  to answer these questions by providing accurate analytical results and simple semi-closed form expressions for performance bounds, which in turn are amenable to efficient optimization.

\vspace{-0.4cm}
\subsection{Related Work}
Paper \cite{HuaLau09} investigates both the dedicated and shared approaches for uplink resources and shows that in general, the dedicated approach is more efficient in terms of \emph{transmission capacity} -- i.e., it allows more successful transmissions per unit area. On the other hand, in terms of total rate, \cite{YuDop11} shows the shared approach is better  in a single cell scenario  with a maximal rate cap, taking into account both the uplink and downlink transmissions. For a shared network, careful resource allocation can control the mutual interference between cellular  and D2D transmissions. For example, an intelligent frequency allocation where orthogonal resources are assigned to nearby cellular and D2D links~\cite{XuWan10}, exclusive D2D transmission zones \cite{MinLee11}, mixed integer nonlinear programming problems \cite{ZulHua10,PhuHos13}, auction based mechanisms in a downlink single cell \cite{XuSon12,XuSon13}, a Stackelberg game framework in an uplink single cell \cite{WanSon13}, and  interference randomization through time hopping~\cite{CheCha10} are viable approaches for controlling interference. Besides, performance analysis considering interference among D2D links is conducted in \cite{LeiZha13}.  
For mode selection, simple distance-based and received signal-based mode selections are proposed in~\cite{XiaPen12} and  \cite{AdaNak98}, respectively. More sophisticated mode selection involving other user equipments (UEs) are proposed in~\cite{HakChe10,DopYu10,JunHwa12}. Nevertheless, the majority of earlier studies consider a single cell scenario and propose heuristic algorithms to improve network performance. In this paper, we leverage tools from stochastic geometry to study a more general D2D-enabled cellular network.

In a pure ad hoc network, there has been significant success over the past decade in proposing tractable models for performance analysis and system design via stochastic geometry~\cite{HaeAnd09}. For example,~\cite{BacBla06} investigates an Aloha-type access mechanism for a large ad hoc network, while~\cite{BacLi11} analyzes a carrier sense multiple access (CSMA)-type mechanism. The outage probability and transmission capacity of ad hoc networks are studied and summarized in~\cite{WebYan05,WebAnd10}.  There have been some analogous more recent results for cellular networks \cite{AndBac10,DhiGan12}, where the BSs are modeled as a Poisson point process (PPP).  
D2D-enabled cellular networks are essentially a combination of cellular and ad hoc networks, but combining these models into a more general framework is nontrivial.  D2D communication can either utilize uplink or downlink resources, and it is not {\it a priori} clear which resource utilization is better. There has been at least one very recent (parallel) work attempting this for the uplink system  \cite{lin2013uplinkD2D}.  For comparison and the completeness of study, we instead investigate a D2D-enabled  cellular network, where downlink resources are either partitioned or shared between D2D and downlink cellular transmissions.

\vspace{-0.4cm}
\subsection{Contributions}
\vspace{-0.1cm}
The objective of this paper is to propose a general framework for the analysis of system performance (e.g., the signal-to-interference-plus-noise ratio (SINR) distribution and total rate) in D2D-enabled cellular networks. We apply this framework to both dedicated and shared downlink networks, which are easy to analyze and optimize, and can be adopted as flexible baseline models for further study.   Our key contributions are enabled by simultaneously leveraging techniques from stochastic geometry and optimization~theory.

\textbf{Tractable model for both dedicated and shared cellular networks.} We propose a tractable model for a large D2D-enabled cellular network, where the locations of BSs and UEs are modeled as spatial point processes, particularly PPPs. We propose to adopt a time-frequency hopping scheme for potential D2D links to randomize the interference, where each potential D2D link chooses its operation mode (i.e., D2D or cellular mode) at each time slot independently according to a predefined time hopping probability, and accesses each subband independently with a predefined frequency hopping probability. In this model, the derived SINR distributions have remarkably simple forms, which provide an efficient system performance evaluation without time-consuming simulations.  It is not always possible to get the expected rate in closed form. We provide a general expression for the average rate and then derive its lower bound, which is in a semi-closed form in interference-limited networks. 

\textbf{ Optimization of network performance and design insights.} Based on the derived SINR distributions and the lower bounds on  average rates, we investigate the optimal D2D hopping probabilities (i.e., how often potential D2D links should request a time or frequency slot) using optimization theory. The optimal network performance can serve as a lower bound for D2D-enable cellular networks with more sophisticated scheduling scheme.
We find that in many cases, we can either derive the optimal solution in a simple closed-form, or reduce the problem to lower dimension (e.g., one of the hopping probabilities is found in closed-form).   The observed design principles are now summarized.

\textit{Dedicated vs. shared.}  Unsurprisingly, the dedicated network has better SINRs since resources allocated to D2D and cellular links are orthogonal. With an optimal spectrum partition between D2D and cellular users, the dedicated network also provides larger average rate, but should be interpreted cautiously.  For example, the optimal spectrum partition may be very hard to determine, or it may vary significantly in time or space over a non-homogeneous network (recall we model all BSs and UEs as  homogenous PPPs).  In such cases, the shared approach may be able to perform significantly better, as it is more flexible. For cases with small amount of local traffic, the shared approach may also have a better performance.

\emph{Optimal hopping scheme.}  In the dedicated network with any general non-decreasing utility function, the optimal D2D frequency hopping depends on the service demands of D2D users (i.e., the traffic arrival rate). D2D links with more traffic to transmit should be more aggressive in their spectrum access, despite the interference that this generates to the rest of the network. The same observation can be done from simulation results of the shared network.

As for time hopping, in most considered interference limited cases with heavy load, all potential D2D links should operate in D2D mode (bypassing the BS), assuming the objective is to maximize the total average rate. This result is independent of the average distance between a D2D transmitter and its receiver, which is perhaps surprising, and largely due to the use of total average rate as the utility function. We demonstrate this by giving an example in Section V-A, showing that the optimal mode selection for different utility functions may be very different.  In principle, any utility function can be investigated based on the proposed framework, but we use total average rate in a heavily loaded network, and leave other utility functions to future work.


The paper is organized as follows. We present the system model in Section \ref{sec:model}. In Sections \ref{sec:overlay} and~\ref{sec:underlay}, we analyze the SINR distributions and average rates in dedicated and shared networks, respectively. We investigate the optimization in terms of   hopping probabilities in Section~\ref{sec:opt}. The numerical results are given in Section \ref{sec:simulation}. Finally, we conclude and suggest possible extensions in Section~\ref{sec:conclusion}.


\section{System Model}\label{sec:model}
We focus on a downlink model, where D2D communication uses downlink cellular resources. The key aspects of the model are described in the following subsections. 
\vspace{-0.5cm}
\subsection{Deployment of D2D and cellular networks}
We consider a large D2D-enabled downlink cellular network, illustrated in Fig.~\ref{fig:network}. We classify the potential D2D transmitters into $M$ types which may differ in terms of their service demands and/or the MAC protocol. 
Note that similar to the current wireless traffic growth driven by the smartphones proliferating around the world, more local traffic will possibly be generated once the D2D features are available in future. Therefore, at this stage, the D2D traffic demand as well as its growth is not clear. Though any general distributions can be used to model the location of D2D users, random (uniform) dropping is one of the most popular models in both academia and industry (e.g., \cite{ALU2013drop,Qualcomm2013drop,lin2013uplinkD2D,BacLi11}). In this paper, we propose to use the following random dropping model as a first-cut study, and leave other models (e.g.,  clustered UEs in hotspot) to future work. We assume that the D2D transmitters of the $i$th type are randomly distributed according to a homogeneous PPP $\Phi_{D_i}$ with density $\lambda_{D_i}$. The $M$ PPPs are assumed to be  independent of each other. Note that the performance of the PPP model can serve as a benchmark for more general settings. Each receiver is assumed to be randomly located around its transmitter according to a two-dimensional Gaussian distribution $N(0,\delta^2)$, with the phase uniformly distributed in~$[0, 2\pi]$, so $\delta$ parameterizes the distance between the receiver and its transmitter which is Rayleigh distributed with mean $\delta\sqrt{\frac{\pi}{2}}$~\cite{BacLi11}.   Other distance distributions can be easily incorporated into the considered  framework.

We model the BSs and cellular users in the cellular network as two further independent homogeneous PPPs, denoted by $\Phi_B$ and $\Phi_U$ with densities $\lambda_B$ and $\lambda_U$, respectively. The model can be easily extended to the case where cellular users have heterogeneous service demands.  By tuning the BS, D2D and cellular user densities, along with $\delta$, a very large class of plausible network topologies can be considered with this~framework.

\begin{figure}
\centering
\includegraphics[width=6.4cm,height=6.1cm]{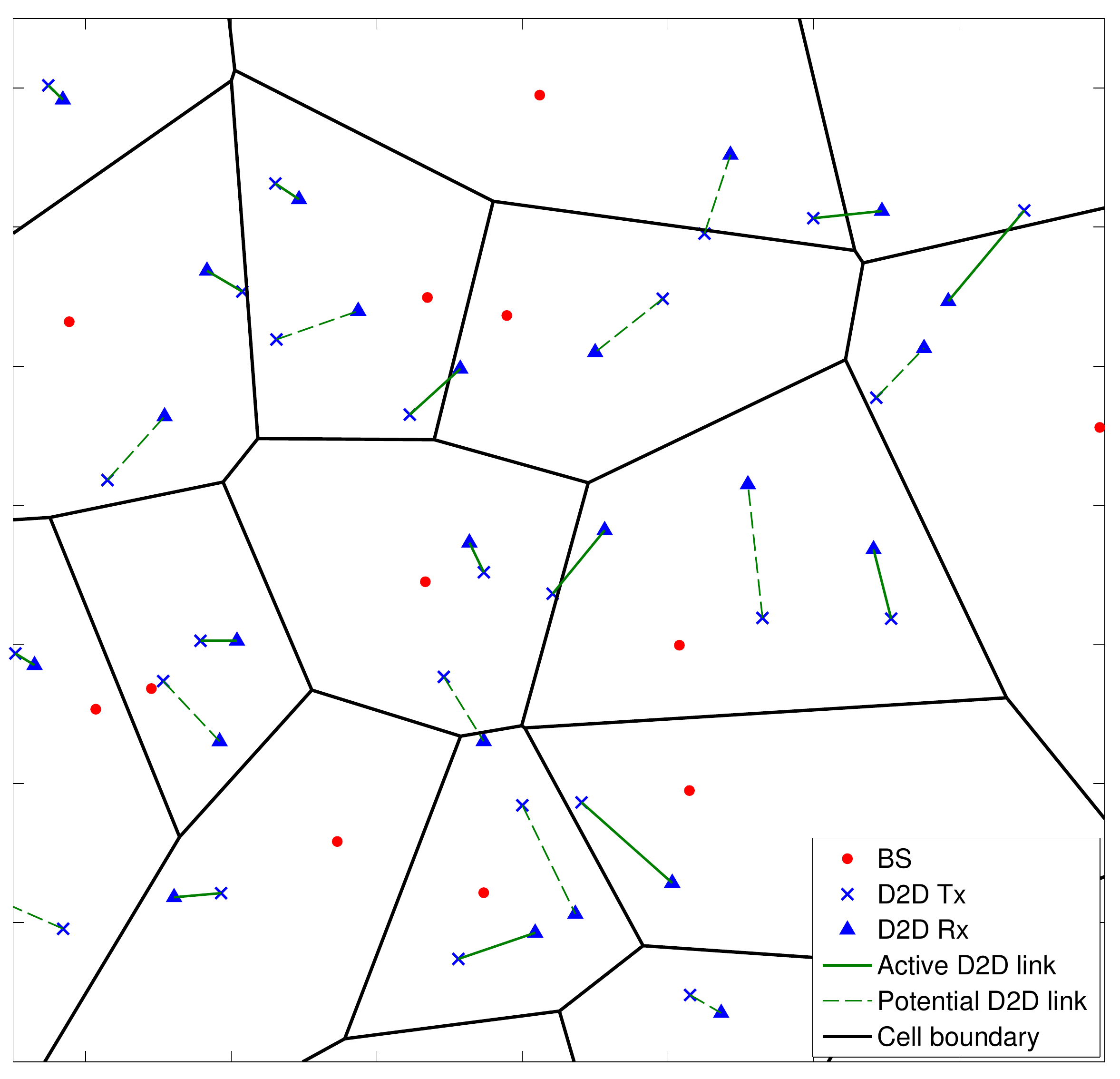}
\caption{Illustration of the network model. The red points are BSs which are deployed according to a PPP. The D2D links include both silent potential D2D links (with dashed lines) and active D2D links (with solid lines).}
\label{fig:network}
\end{figure}

\vspace{-0.4cm}
\subsection{Scheduling scheme}\label{sec:model-scheduling}
We propose to  adopt a time-frequency hopping scheme for scheduling D2D links, to randomize the occurrence of access collisions with nearby interfering UEs, and thus randomize the  strong interference~\cite{chen2010time}. 
As illustrated in Fig. \ref{fig:hopping}, the time axis is divided into consecutive operation slots. At each slot, the potential D2D links can either be active (i.e., in D2D mode, where traffic is transmitted directly between UEs) or silent (i.e., in cellular mode, where traffic is relayed via the BS), and each potential D2D link selects its operation mode  independently. For example, a potential D2D link of type $i$ would either be active   with probability $p_{t_i}\in[0, 1]$ or operate in  cellular mode with probability $1-p_{t_i}$.  As $p_{t_i}$ increases, more potential D2D links would be in D2D mode. Thus the time hopping is a tool for implementing mode selection, where the mode selection parameter~$p_{t_i}$ results in a  tradeoff between spatial reuse and additional interference. 
In the frequency domain, the $i$th type D2D links would access each subband independently with probability $p_{f_i}\in[0, 1]$. As $p_{f_i}$ increases, more frequency resources are utilized by D2D links, at the cost of increasing interference since more D2D links access the same subbands. Therefore, the frequency hopping probability $p_{f_i}$ results in a tradeoff between frequency efficiency and additional interference. 

Using the time-frequency hopping scheme, D2D links are  scheduled independently of one another in an Aloha-type fashion in both time and frequency ~\cite{Abr70}.  
The outage probability, which is defined as the probability that the SINR is less than or equal to a given threshold (i.e., $\mathbb{P}\left(\sinr \leq \beta\right)$, where $\beta$ is a predefined threshold),  increases almost linearly with the  hopping probabilities in the low outage regime, while the spatial reuse increases linearly with the number of time or frequency slots~\cite{WebYan05}. So we can adjust the outage probability by changing the time and frequency hopping probabilities, so as to meet a target outage constraint (i.e., $\mathbb{P}\left(\sinr \leq \beta\right)\leq \epsilon$, where $\epsilon$ is a predefined parameter).
Other scheduling approaches such as centralized approaches or CSMA can be adopted, but with drawbacks in both practice (e.g. high overhead) and in terms of tractability (the resulting transmitters are correlated and thus no longer a PPP).  
 
Further, we introduce a penalty assessed to potential D2D links operating in cellular mode, denoted by~$w$, to account for using both uplink and downlink time-frequency resources. A nominal value for~$w$ might be 2, because the local traffic transmitted via a BS requires to establish both the uplink and downlink transmissions, while the D2D transmission only needs to establish one link. 
The parameter $w$ can be considered as the price for D2D traffic using cellular mode, which can help adjust the load between D2D and cellular~networks.

\begin{figure}
\centering
\includegraphics[width=6.9cm,height=6cm]{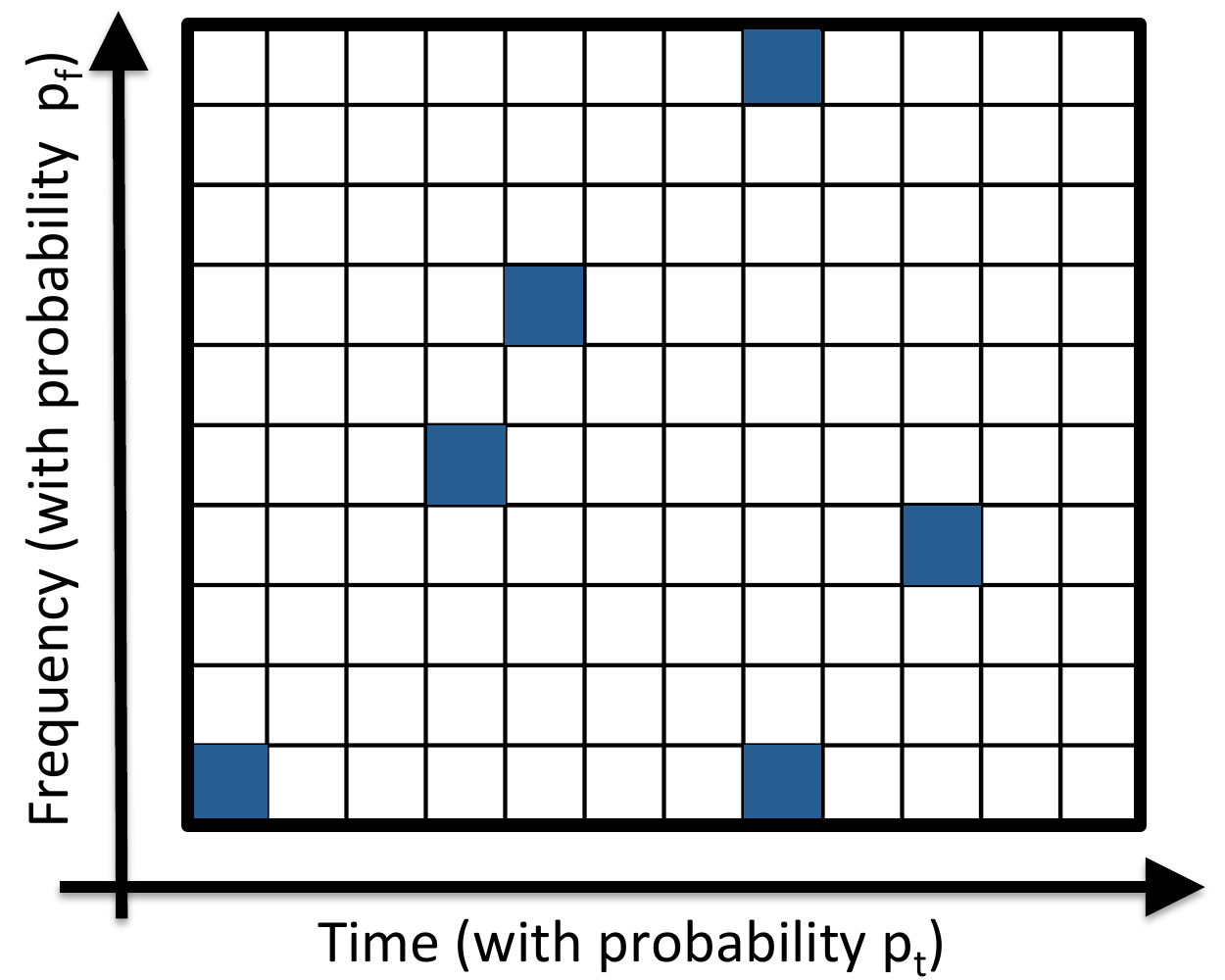}
\caption{Illustration of the time-frequency hopping scheme. The shadowed squares are the resource blocks (RBs) occupied by some active D2D links. A D2D link accesses each time slot uniformly with probability $p_t$ and accesses each subband uniformly with probability $p_f$.}
\label{fig:hopping}
\end{figure}

\vspace{-0.4cm}
\subsection{Load modeling}\label{sec:model-load}
\vspace{-0.1cm}
We assume that there are $B$ frequency slots (subbands) in the network. All potential D2D links would either operate in the dedicated network or the shared network. 
In a dedicated network,  a fraction of resources is allocated to D2D links, denoted by $\theta$, while the rest is allocated to the cellular network. In contrast, in the shared network, the active D2D links share the resources with the cellular network.  We assume that the UEs are associated with the nearest BSs, and each BS randomly allocates resource blocks (RBs) to its cellular UEs according to their service demands. The performance using random allocation is a lower bound on more sophisticated scheduling schemes (e.g., ones which are channel or SINR-dependent), and the consideration of such schemes is left to future work.

The cellular users are assumed to have the same resource requirement, denoted by $b_{C}$. The resource requirement for the $i$th type D2D links is denoted by $b_{D_i}$. In this paper, we consider the resource requirement in terms of the number of subbands for tractability \cite{LinAnd12}. The ratio of total subbands $B$ to the resource requirement of UEs represents different service demand scenarios (e.g., $b_{C}=b_{D_i}=B$ for a heavily loaded network). Let $N_{C}$ and $N_{D_i}$ be the number of original cellular users and of type $i$ potential D2D links in cellular mode, respectively. Let $B_C$ denote the number of available subbands for the cellular network, where $B_C=(1-\theta)B$ in the dedicated network and~$B_C=B$ in the shared network. The cell is lightly loaded if $B_C/(b_{C}N_{C}+\sum_{i=1}^{M}(1-p_{t_i})b_{D_i}N_{D_i})>1$, and fully loaded otherwise. In the former case, the BS only transmits on a subset of the subbands, which are called normal RBs, while the other RBs are left blank (i.e., the BS is not transmitting in the corresponding RBs). 
In the latter case, some users have to be blocked (or all UEs are admitted by a cell but can only obtain a fraction of time slots). The admission probability (or fraction of time slots) can be obtained by~$B_C/(b_{C}N_{C}+\sum_{i=1}^{M}(1-p_{t_i})b_{D_i}N_{D_i})$~\cite{LinAnd12},  which is essentially the ratio of the number of available subbands in the cell to the number of subbands needed by cellular users.


\vspace{-0.4cm}
\subsection{Channel model}
In this paper, we assume that the transmission powers are fixed at $P_D$ for D2D transmitters and $P_B$ for cellular BSs. General attenuation functions can be adopted, but we focus on the standard power law attenuation function~$l(d)=d^{-\alpha}$, where $d$ is the distance from the transmitter to a receiver, and $\alpha$ is the path loss exponent. 
 We assume all links experience independent Rayleigh fading. Shadowing is not explicitly modeled, but is already captured by the randomness of PPP in some sense, e.g., \cite{BlaBar12} showed that a grid BS model with fairly strong (standard deviation greater than 10dB) log-normal shadowing is nearly equivalent to a PPP model without shadowing.

The notations are summarized in Table \ref{tb:notation}.

\begin{table}
\small
\caption{Notation Summary}\label{tb:notation}
\begin{center}
\begin{tabular}{|c||c|}
\hline
Notation & Description\\
\hline
$\Phi_{D_i}$ & PPP of type $i$ D2D links\\
\hline
$\Phi_B$,$\Phi_U$ & PPP of BSs, cellular UEs\\
\hline
$\lambda_{D_i}$ & Density of type $i$ D2D links\\
\hline
$\lambda_B$, $\lambda_U$ & Density of BSs, cellular UEs\\
\hline
 $M$ & Number of D2D types\\
 \hline
 $\delta$ & \tabincell{c}{Parameter of distance between\\ D2D transmitter and its receiver}\\
 \hline
$p_{f_i}$ & Frequency hopping probability\\
\hline
 $p_{t_i}$ & Time hopping probability\\
\hline
$w$ & Penalty for potential D2D links in cellular mode\\
\hline
$\theta$ & Fraction of resource allocated to D2D\\
\hline
$B$ & Total frequency subbands\\
\hline
$b_{D_i}$ & Service demand of type $i$ D2D links\\
 \hline
$b_C$ & Service demand of cellular users\\
\hline
$P_B$& Transmit power of BSs\\
\hline
$P_D$& Transmit power of D2D transmitters\\
\hline
$\sigma^2$ & Noise  power \\
\hline
$\alpha$ & Path loss exponent\\
\hline
$\rho^{(O)}$, $\rho^{(S)}$ & \tabincell{c}{Fraction of normal RBs in the dedicated\\ and shared network, respectively}\\
\hline
$p_a^{(O)}$, $p_a^{(S)}$ & \tabincell{c}{Admission probability in the dedicated\\ and  shared network, respectively}\\
\hline
$P_D^{(O)}$, $P_D^{(S)}$ & \tabincell{c}{Coverage probability of D2D links in the\\ dedicated and shared network, respectively}\\   
\hline
$P_C^{(O)}$, $P_C^{(S)}$ & \tabincell{c}{Coverage probability of cellular UEs in the\\ dedicated and shared network, respectively}\\
\hline
$R_D^{(O)}$, $R_D^{(S)}$ & \tabincell{c}{Rate of D2D links in the dedicated\\ and shared network, respectively}\\
\hline
$R_C^{(O)}$, $R_C^{(S)}$ & \tabincell{c}{Rate of cellular UEs in the dedicated \\and shared network, respectively}\\
\hline
\end{tabular}
\end{center}
\end{table}
\section{Analysis of the Dedicated  Network}\label{sec:overlay}
In this section, we investigate the key performance metrics in the dedicated network.

\vspace{-0.4cm}
\subsection{SINR Distribution}
Without loss of generality, due to the stationarity of $\Phi_D$, we conduct analysis on a typical D2D receiver located at the origin, whose transmitter is active. The location of the typical transmitter  is denoted by~$X_0$. For simplicity, we denote D2D links by the location of their transmitters (e.g., the typical link is called link~$X_0$). According to the load model described in Section \ref{sec:model-load}, the network has blank RBs when it is under-loaded. To get the average fraction of blank RBs, we first find the average load per cell. The average coverage area of a BS  is $\frac{1}{\lambda_B}$ \cite{Bac09}. Therefore, the average numbers of cellular users and of $i$th-type D2D links in cellular mode in a cell are~$\lambda_U/\lambda_B$ and~$(1-p_{t_i})\lambda_{D_i}/\lambda_B$, respectively. Recalling the available fraction of subbands for the cellular network is~$(1-\theta)B$, the average fraction of normal RBs sent by a BS is approximated~by
\begin{equation}\label{eq:rho-overlay}
\rho^{(O)}  \approx \min\{\frac{b_C\lambda_U+\sum_{i=1}^{M}(1-p_{t_i})b_{D_i}\lambda_{D_i}}{\lambda_B(1-\theta)B}, 1\}.
\end{equation}
Assuming that BSs randomly allocate RBs to cellular UEs, the set of active interfering BSs at a typical RB in the dedicated network, denoted by~$\tphi_B$, can be considered as a thinning process from the baseline BS process $\Phi_B$, which is approximated by a PPP with density $\rho^{(O)} \lambda_B$. This approximation is validated in Section \ref{sec:simulation}, where we observe that analysis and simulation results are in good agreement.

Adopting time-frequency hopping, the interfering D2D transmitters are those  which access the same time slots and subbands. We denote the set of interfering transmitters of the $i$th type by $\tphi_{D_i}$, which is a thinning point process from the PPP $\Phi_{D_i}$. Based on the Thinning Theorem of PPP \cite{Bac09}, the thinning process $\tphi_{D_i}$ is a PPP with density $p_{t_i} p_{f_i} \lambda_{D_i}$. Applying the superposition of PPPs \cite{Bac09}, the set of interfering transmitters can be considered as a single PPP $\tphi_D$ with density $\tilde{\lambda}_{D}=\sum_{i=1}^{M}p_{t_i} p_{f_i} \lambda_{D_i}$. 

The SINR of the typical D2D link is
\vspace{-0.2cm}
\begin{equation*}
\sinr = \frac{P_Dh_0 |X_0|^{-\alpha}}{I_{\tphi_D}+\sigma^2},
\end{equation*}
where $I_{\tphi_D} = \sum_{X_i\in\tphi_D \setminus X_0} P_D h_i |X_i|^{-\alpha}$ is the interference from other D2D users,  and $\sigma^2$ is the noise power. The  SINR complementary cumulative distribution function (CCDF) of the D2D links, also known as the coverage probability, is given by Proposition \ref{theo:cdf-d2d}. 

\begin{prop}\label{theo:cdf-d2d}
The SINR distribution of D2D links in the dedicated network is 
\begin{equation}\label{eq:cdf-d2donly}
\begin{aligned}
&\mathbb{P}_D^{(O)}(\beta)\overset{\triangle}{=}\mathbb{P}\left(\text{SINR}>\beta\right)\\
=&\int_0^\infty e^{-\beta P_D^{-1}\sigma^2 v^{\alpha}}\mathcal{L}_{I_{\tphi_D}}(\beta P_D^{-1} v^{\alpha}) \frac{v e^{-\frac{v^2}{2\delta^2}}}{\delta^2}dv.
\end{aligned}
\end{equation}
where the Laplace transform of the interference from D2D transmitters is
\begin{equation}\label{eq:lap-d2donly}
\mathcal{L}_{I_{\tphi_D}}(s)=\exp\left( - \tilde{\lambda}_{D} \frac{2\pi^2 /\alpha}{\sin\left(2\pi/\alpha\right)}(sP_{D})^{\frac{2}{\alpha}}  \right),
\end{equation}
\end{prop}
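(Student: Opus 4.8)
The plan is to condition on the typical transmitter--receiver distance $v=|X_0|$, use the exponential distribution of the desired-link fading $h_0$ (Rayleigh fading) to turn the coverage event into a Laplace transform of the interference evaluated at a deterministic argument, and then compute that Laplace transform via the probability generating functional (PGFL) of the PPP $\tphi_D$.

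\emph{Step 1 (conditioning and the Rayleigh-fading trick).} Since each receiver is displaced from its transmitter by an $N(0,\delta^2)$ vector, the distance $v=|X_0|$ has the Rayleigh density $\frac{v}{\delta^2}e^{-v^2/(2\delta^2)}$ on $v\ge 0$, and (by Slivnyak's theorem applied to the independently marked PPP of D2D transmitters) $v$ is independent of the reduced interferer configuration $\tphi_D\setminus\{X_0\}$ and of the fading variables. Recalling that $h_0$ is exponential with unit mean, conditioning on $v$ and on $I_{\tphi_D}$ gives
\begin{align*}
\mathbb{P}\!\left(\sinr>\beta \,\middle|\, v\right)
&=\mathbb{P}\!\left(h_0>\beta P_D^{-1}v^{\alpha}\,(I_{\tphi_D}+\sigma^2)\,\middle|\,v\right)\\
&=\mathbb{E}\!\left[e^{-\beta P_D^{-1}v^{\alpha}(I_{\tphi_D}+\sigma^2)}\,\middle|\,v\right]
= e^{-\beta P_D^{-1}\sigma^2 v^{\alpha}}\;\mathcal{L}_{I_{\tphi_D}}\!\left(\beta P_D^{-1}v^{\alpha}\right).
\end{align*}
Deconditioning against the Rayleigh density of $v$ yields~\eqref{eq:cdf-d2donly}.

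\emph{Step 2 (Laplace transform via the PGFL).} As established just before the statement, $\tphi_D$ is (approximately) a PPP of density $\tilde{\lambda}_{D}=\sum_{i=1}^{M}p_{t_i}p_{f_i}\lambda_{D_i}$, so $I_{\tphi_D}=\sum_{X_i\in\tphi_D\setminus X_0}P_D h_i|X_i|^{-\alpha}$ is a shot-noise field over such a PPP with i.i.d.\ unit-mean exponential marks $h_i$. Averaging over the marks first (each $\mathbb{E}_h[e^{-sP_D h|X_i|^{-\alpha}}]=(1+sP_D|X_i|^{-\alpha})^{-1}$) and then invoking the PGFL of the PPP,
\begin{align*}
\mathcal{L}_{I_{\tphi_D}}(s)
&=\mathbb{E}\!\left[\prod_{X_i\in\tphi_D}\frac{1}{1+sP_D|X_i|^{-\alpha}}\right]
=\exp\!\left(-\tilde{\lambda}_{D}\int_{\mathbb{R}^2}\!\left(1-\frac{1}{1+sP_D|x|^{-\alpha}}\right)dx\right).
\end{align*}
Passing to polar coordinates and substituting $x=(sP_D)^{1/\alpha}y$ reduces the exponent's integral to $2\pi(sP_D)^{2/\alpha}\int_0^\infty \frac{y}{1+y^{\alpha}}\,dy$, and the last one-dimensional integral equals $\frac{\pi/\alpha}{\sin(2\pi/\alpha)}$ by the identity $\int_0^\infty \frac{t^{a-1}}{1+t}\,dt=\frac{\pi}{\sin(\pi a)}$ with $a=2/\alpha$; this requires $\alpha>2$, precisely the regime in which the D2D interference is a.s.\ finite and $\mathcal{L}_{I_{\tphi_D}}$ is nondegenerate. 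Substituting back delivers~\eqref{eq:lap-d2donly}.

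\emph{Main obstacle.} The analytic difficulty here is minimal — everything follows from the exponential-fading identity and the PGFL — so the real care is in the bookkeeping: being explicit about the independences used (the fading, the Gaussian displacement, and the time/frequency hopping thinnings that define $\tphi_D$), using Slivnyak's theorem to discard the point $X_0$ when writing $I_{\tphi_D}$ as a shot noise, and recording the path-loss condition $\alpha>2$ needed for the radial integral to converge. The reduction of the superposed interfering D2D transmitters of all $M$ types to a single PPP of density $\tilde{\lambda}_{D}$ is already carried out in the paragraph preceding the proposition, so I would simply cite it rather than repeat the thinning and superposition arguments.
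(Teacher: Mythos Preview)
Your proof is correct and follows essentially the same route as the paper's: condition on the link distance $v$, exploit the exponential distribution of $h_0$ to reduce the coverage probability to $e^{-s\sigma^2}\mathcal{L}_{I_{\tphi_D}}(s)$, then evaluate the Laplace transform via the PGFL of the PPP and decondition against the Rayleigh density of $v$. The only cosmetic difference is that the paper computes the radial integral through $\Gamma(1-2/\alpha)\mathbb{E}[h^{2/\alpha}]$ whereas you go directly via the identity $\int_0^\infty t^{a-1}/(1+t)\,dt=\pi/\sin(\pi a)$; both yield the same constant $\frac{2\pi^2/\alpha}{\sin(2\pi/\alpha)}$.
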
 
\begin{proof}
See Appendix \ref{pf:cdf-d2d}.
\end{proof}

The derived SINR distribution indicates that the time and frequency hopping impact the SINR of D2D links in a product term $p_{t_i}p_{f_i}$ in $\tilde{\lambda}_D$. That is, as long as the product $p_{t_i}p_{f_i}$ is a constant, no matter how the time hopping or the frequency hopping is changed, the network performance will be the same. There is a tradeoff between the density and the performance of the links in D2D mode: as more and more potential D2D transmitters attempt to transmit, though the density of active links increases, the interference increases and thus the SINR of active D2D links decreases.

With the assumption of nearest-BS association, the distribution of the distance between a user and its associated BS, denoted by $r$, is $f_r(r)=e^{-\lambda_B\pi r^2}2\lambda_B\pi r$ \cite{StoKen87}. In the dedicated network, there is no D2D-cellular interference. Therefore, we can leverage the analytical results of the cellular network in~\cite{AndBac10} to evaluate the cellular network performance, where the SINR CCDF of cellular users is given by Proposition~\ref{prop:sinr-cellonly}.
\begin{prop}\label{prop:sinr-cellonly}
The SINR distribution of a typical cellular user in the dedicated network is
\begin{equation}\label{eq:cdf-cellonly}
\begin{aligned}
&\mathbb{P}_C^{(O)}(\beta)\overset{\bigtriangleup}{=}\mathbb{P}(\sinr>\beta)\\
=&\int_0^\infty e^{-\beta P_{B}^{-1} \sigma^2r^{\alpha}}  \mathcal{L}_{I_{\tphi_B}}(\beta P_{B}^{-1} r^{\alpha}) e^{-\lambda_B\pi r^2}2\lambda_B\pi rdr.
\end{aligned}
\end{equation}
where the Laplace transform of interference from the  cellular UEs is 
\begin{equation}\label{eq:lap-cd-bb}
\mathcal{L}_{I_{\tphi_B}}(\beta P_{B}^{-1} r^{\alpha})=\exp\left( -2\pi\rho^{(O)}\lambda_B r^2 H_1(\beta, \alpha)\right),
\end{equation}
and $H_1(\beta, \alpha)=\int_1^\infty \frac{x}{1+\beta^{-1}x^{\alpha}}dx$.
\end{prop}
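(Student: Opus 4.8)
The plan is to follow the now-standard coverage-probability derivation for PPP cellular networks (as in \cite{AndBac10}), modified only to carry the RB-occupancy thinning factor $\rho^{(O)}$ through the interference term. First I would condition on $r$, the distance from the typical cellular user at the origin to its serving (nearest) BS; by the void probability of $\Phi_B$ this has density $f_r(r)=2\pi\lambda_B r\,e^{-\lambda_B\pi r^2}$. The key structural observation is that conditioning on $\{$nearest BS at distance $r\}$ is equivalent to conditioning on $\Phi_B$ having no point in the ball $B(0,r)$, which (by the restriction/independence properties of a PPP) leaves the remaining points distributed as a PPP of intensity $\lambda_B$ on $\mathbb{R}^2\setminus B(0,r)$; applying the independent thinning of Section~\ref{sec:overlay}, the active interfering BSs $\tphi_B$ form a PPP of intensity $\rho^{(O)}\lambda_B$ supported on $\mathbb{R}^2\setminus B(0,r)$, so in particular every interferer lies at distance at least $r$.

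Next I would exploit the Rayleigh fading of the desired link, $h_0\sim\mathrm{Exp}(1)$ and independent of the interference $I_{\tphi_B}$. Rewriting $\{\sinr>\beta\}$ as $\{h_0>\beta P_B^{-1}r^{\alpha}(I_{\tphi_B}+\sigma^2)\}$ and taking the conditional expectation gives
\[
\mathbb{P}(\sinr>\beta\mid r)=\mathbb{E}\big[e^{-\beta P_B^{-1}r^{\alpha}(I_{\tphi_B}+\sigma^2)}\,\big|\,r\big]
= e^{-\beta P_B^{-1}\sigma^2 r^{\alpha}}\,\mathcal{L}_{I_{\tphi_B}}\big(\beta P_B^{-1}r^{\alpha}\big),
\]
since $\sigma^2$ is deterministic and the remaining factor is, by definition, the Laplace transform of $I_{\tphi_B}$ at $s=\beta P_B^{-1}r^{\alpha}$. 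De-conditioning by integrating against $f_r(r)$ then yields \eqref{eq:cdf-cellonly}, so only the Laplace transform remains.

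For that I would invoke the probability generating functional of the PPP $\tphi_B$ and average over the i.i.d.\ $\mathrm{Exp}(1)$ fades of the interferers,
\[
\mathcal{L}_{I_{\tphi_B}}(s)=\exp\!\left(-2\pi\rho^{(O)}\lambda_B\int_r^\infty\Big(1-\mathbb{E}_h\big[e^{-sP_B h t^{-\alpha}}\big]\Big)t\,dt\right),
\]
the lower limit being $r$ precisely because of the empty disk $B(0,r)$. Using $\mathbb{E}_h[e^{-sP_B h t^{-\alpha}}]=(1+sP_B t^{-\alpha})^{-1}$, substituting $s=\beta P_B^{-1}r^{\alpha}$ and rescaling $x=t/r$ collapses the integral to $r^2\int_1^\infty \frac{x}{1+\beta^{-1}x^{\alpha}}\,dx=r^2 H_1(\beta,\alpha)$, which is \eqref{eq:lap-cd-bb}.

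The only genuinely delicate point is the interplay between the nearest-BS conditioning and the thinning: one must check that the thinning is independent of the event defining $r$, so that the excluded disk is inherited by $\tphi_B$ --- the routine ``thin-then-condition-on-an-empty-ball equals condition-then-thin'' fact for PPPs. The intensity $\rho^{(O)}\lambda_B$ is itself the approximation already introduced in Section~\ref{sec:overlay} and validated numerically in Section~\ref{sec:simulation}; modulo that approximation, every remaining step is identical to \cite{AndBac10}, so in the write-up I would defer to that reference for the unchanged algebra and only flag where $\rho^{(O)}$ enters.
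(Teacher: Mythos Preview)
Your proposal is correct and follows essentially the same approach as the paper's own proof: condition on the nearest-BS distance $r$, use Rayleigh fading to reduce the conditional coverage probability to a noise factor times the Laplace transform of the interference, apply the PGFL of the thinned PPP $\tphi_B$ with lower integration limit $r$ (from the nearest-BS association), and finish with the substitution $x=u/r$. The paper's proof is terser and simply defers to \cite{AndBac10}; your additional remarks about why the excluded disk survives the thinning and about $\rho^{(O)}$ being an approximation are accurate elaborations rather than a different route.
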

\begin{proof}
The proposition is shown in \cite{AndBac10}. For completeness, we provide the details as follows.
Denoting~$s=\beta P_{B}^{-1} r^{\alpha}$,  given the distance between user and its closest BS $r$, the conditional coverage probability  is
$\mathbb{P}(\sinr (r)>\beta \mid r)= \exp(-s\sigma^2)\mathcal{L}_{I_{\tphi_B}}(s)$. Denoting the BS serving the typical user by $B_0$, the Laplace transform of interference from other BSs $I_{\Phi_B}$ is 
\begin{equation*}
\begin{aligned}
\mathcal{L}_{I_{\tphi_B}}(s)=&\mathbb{E} \left[ \exp\left(-s \sum_{B_i\in \tphi_{B}\setminus B_0} P_{B} h_i r^{-\alpha} \right)\right]\\
\overset{(a)}{=}&\exp\left( -2\pi\rho^{(O)}\lambda_B \int_r^\infty \frac{u}{1+s^{-1}P_{B}^{-1}u^{\alpha}}du\right)\\
=&\exp\left( -2\pi\rho^{(O)}\lambda_B r^2 H_1(\beta, \alpha)\right),
\end{aligned}
\end{equation*}
where the lower limit of integral $r$ in $(a)$ follows from the assumption that the user is associated with its closest BS, and the last equality is obtained by letting $x=u/r$.
\end{proof}

Considering the special case where the network is interference-limited (i.e., the thermal noise is ignored), the above results can be further simplified:
\begin{cor}\label{cor:sinr-d2donly-nonoise}
When $\sigma^2\rightarrow 0$, the SINR distributions of D2D links and cellular users, respectively, are
\begin{equation}\label{eq:cdf-d2donly-nonoise}
\mathbb{P}_D^{(O)}\left(\beta\right)=\frac{1}{1+2\delta^2 \tilde{\lambda}_{D} \frac{2\pi^2 /\alpha}{\sin\left(2\pi/\alpha\right)}\beta^{\frac{2}{\alpha}}},
\end{equation}
\begin{equation}\label{eq:cdf-cellonly-nonoise}
\mathbb{P}_C^{(O)}(\beta) = \frac{1}{2 \rho^{(O)} H_1(\beta, \alpha) + 1}.
\end{equation}
\end{cor}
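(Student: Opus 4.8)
The plan is to derive both identities directly from Propositions~\ref{theo:cdf-d2d} and~\ref{prop:sinr-cellonly} by letting $\sigma^2\to 0$ inside the integrals in \eqref{eq:cdf-d2donly} and \eqref{eq:cdf-cellonly} and then evaluating the resulting elementary first-moment integrals of a Gaussian-type density.

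For the D2D links, I would first set $\sigma^2=0$ in \eqref{eq:cdf-d2donly}, so that the noise factor $e^{-\beta P_D^{-1}\sigma^2 v^{\alpha}}$ becomes $1$; the interchange of limit and integral is justified by dominated convergence, since the integrand is dominated by the integrable Rayleigh-type density $v e^{-v^2/(2\delta^2)}/\delta^2$ (the Laplace transform is bounded by $1$). Next, substituting $s=\beta P_D^{-1}v^{\alpha}$ into \eqref{eq:lap-d2donly}, the transmit-power factors cancel, giving $\mathcal{L}_{I_{\tphi_D}}(\beta P_D^{-1}v^{\alpha})=\exp\bigl(-\tilde{\lambda}_{D}\tfrac{2\pi^2/\alpha}{\sin(2\pi/\alpha)}\beta^{2/\alpha}v^{2}\bigr)$, so the distance enters only through $v^2$. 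Writing $C$ for the constant $\tilde{\lambda}_{D}\tfrac{2\pi^2/\alpha}{\sin(2\pi/\alpha)}\beta^{2/\alpha}$, the coverage probability collapses to $\frac{1}{\delta^2}\int_0^\infty v\,e^{-(C+1/(2\delta^2))v^2}\,dv$, and the elementary identity $\int_0^\infty v e^{-av^2}\,dv=\frac{1}{2a}$ yields $\frac{1}{2\delta^2 C+1}$, i.e.\ \eqref{eq:cdf-d2donly-nonoise}.

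The cellular case is structurally identical. Setting $\sigma^2=0$ in \eqref{eq:cdf-cellonly} removes the factor $e^{-\beta P_B^{-1}\sigma^2 r^{\alpha}}$, and plugging in \eqref{eq:lap-cd-bb} reduces the integrand to $2\pi\lambda_B r\,\exp\bigl(-(2\pi\rho^{(O)}\lambda_B H_1(\beta,\alpha)+\pi\lambda_B)r^2\bigr)$. Applying $\int_0^\infty r e^{-ar^2}\,dr=\frac{1}{2a}$ with $a=2\pi\rho^{(O)}\lambda_B H_1(\beta,\alpha)+\pi\lambda_B$ gives $\frac{\pi\lambda_B}{2\pi\rho^{(O)}\lambda_B H_1(\beta,\alpha)+\pi\lambda_B}$, and cancelling $\pi\lambda_B$ produces \eqref{eq:cdf-cellonly-nonoise}.

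There is no genuinely hard step here: everything reduces to two first-moment integrals of a Gaussian/Rayleigh density, and the only point meriting a sentence in the writeup is the limit interchange, which is immediate because the Laplace transforms are bounded by $1$. The only mildly noteworthy feature worth flagging is the cancellation of $P_D$ (and the disappearance of $P_B$ once $\sigma^2=0$), which is precisely why the interference-limited expressions are independent of the transmit powers.
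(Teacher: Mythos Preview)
Your proposal is correct and follows exactly the route the paper intends: the corollary is stated without proof in the paper, but the analogous shared-network result (Corollary~\ref{cor:cdf-cd-spec}) is proved in Appendix~\ref{pf-cor:cdf-cd-spec} by the same recipe --- set $\sigma^2=0$, collapse the exponent to a quadratic in the distance variable, and evaluate the Gaussian-type integral (the paper phrases the last step as the substitution $x=v^2$ rather than quoting $\int_0^\infty v e^{-av^2}dv=\frac{1}{2a}$, which amounts to the same thing). Your dominated-convergence remark is a small bonus of rigor the paper omits.
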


\vspace{-0.5cm}
\subsection{Rate Analysis in the Dedicated Network}

In this section, we analyze the average achievable rates of cellular users and D2D links in the dedicated network. By treating the interference as noise, we use Shannon's capacity formula to approximate the rate, i.e., $W\log_2(1+\sinr)$, where $W$ is the available bandwidth. Assuming the fraction of time slots is $T$, the long-term rate becomes $R=TW\log_2(1+\sinr)$, where $TW$ can be considered as the total available fraction of RBs. 

Recall that the admission probability (i.e. available fraction of time slots) of cellular UEs is 
\begin{equation}
p_a^{(O)}=\min\left\{\frac{(1-\theta)B}{b_C\bar{N}_C+\sum_{i=1}^{M} b_{D_i}\bar{N}_{D_i}},\  1\right\},
\end{equation}
where $\bar{N}_C$ and $\bar{N}_{D_i}$ are the expected number of cellular users and of type $i$ D2D links in cellular mode in the typical user associated cell, respectively. Note that a random UE is more likely to be associated with a cell which has a larger coverage area. Denoting the BS serving the typical UE by $B_0$,  the expected coverage area of BS $B_0$ is larger than $1/\lambda_B$, known as Feller's paradox \cite{BacGlo00}. The average coverage area of BS $B_0$ is instead given by $9/(7\lambda_B)$ \cite{FerNed07,SinAnd12}. Therefore, similar to~\cite{LinAnd12}, the admission probability can be approximated to
\begin{equation}\label{eq:pa-overlay}
p_a^{(O)}\approx \min\left\{\frac{7(1-\theta)B\lambda_B}{9\left(b_C\lambda_U+\sum_{i=1}^{M}b_{D_i}(1-p_{t_i})\lambda_{D_i}\right)},\  1\right\}.
\end{equation}

Recalling that $w$ is the price for a D2D link operating in cellular mode, the average rates of cellular users and D2D links are given in Theorem \ref{theo:rate-overlay}.

\begin{theo}\label{theo:rate-overlay}
The average achievable rates of a typical cellular user and a D2D link of the $i$th type, respectively, are
\begin{equation}\label{eq:rate-overlay-celli}
R_{C_i}^{(O)}=b_{C_i}p_a^{(O)}\int_0^\infty\frac{\log_2(e)}{(\beta+1)}\mathbb{P}_C^{(O)}\left(\beta\right)d\beta,
\end{equation}
\begin{equation}\label{eq:rate-overlay-d2di}
\begin{aligned}
R_{D_j}^{(O)}=&\min\{p_{f_j}\theta B, b_{D_j} \} p_{t_j} \int_0^\infty\frac{\log_2(e)}{(\beta+1)}\mathbb{P}_D^{(O)}\left(\beta\right)d\beta \\
&+ \frac{b_{D_j}}{w}(1-p_{t_j})p_a^{(O)}\int_0^\infty\frac{\log_2(e)}{(\beta+1)}\mathbb{P}_C^{(O)}\left(\beta\right)d\beta,
\end{aligned}
\end{equation}
where $\mathbb{P}_C^{(O)}\left(\beta\right) $ and  $\mathbb{P}_D^{(O)}\left(\beta\right) $ are given in (\ref{eq:cdf-cellonly}) and (\ref{eq:cdf-d2donly}), respectively. Further, we can get the average rate of a typical D2D link by
\begin{equation}\label{eq:rate-d2donly}
\begin{aligned}
R_{D}^{(O)}=&\sum_{j=1}^{M} \frac{\lambda_{D_i}}{\lambda_D} R_{D_j}^{(O)}.
\end{aligned}
\end{equation}
\end{theo}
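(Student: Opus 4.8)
The plan is to compute the expected long-term rate $R=TW\log_2(1+\sinr)$ for each user class separately, where $TW$ denotes the (class-dependent) expected fraction of resource blocks occupied by the link and the expectation over $\sinr$ is evaluated through the standard tail-integral identity for the logarithm. The first step is to record that, for any nonnegative random variable $\sinr$,
\begin{equation*}
\mathbb{E}\left[\log_2(1+\sinr)\right]=\int_0^\infty\frac{\log_2(e)}{1+\beta}\,\mathbb{P}\left(\sinr>\beta\right)d\beta,
\end{equation*}
which follows by writing $\log_2(1+x)=\int_0^\infty\frac{\log_2(e)}{1+\beta}\mathbf{1}\{\beta<x\}\,d\beta$ and exchanging this integral with the expectation by Fubini's theorem. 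This turns every rate expression below into a substitution of the coverage probabilities $\mathbb{P}_C^{(O)}$ and $\mathbb{P}_D^{(O)}$ from Propositions \ref{prop:sinr-cellonly} and \ref{theo:cdf-d2d}.

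Second, I would handle the cellular user. A typical cellular user of type $i$ has demand $b_{C_i}$, is admitted (equivalently, obtains a fraction of time slots) with probability $p_a^{(O)}$ from (\ref{eq:pa-overlay}), and, conditioned on being served, its SINR is distributed as in Proposition \ref{prop:sinr-cellonly}. Hence its effective resource share is $b_{C_i}p_a^{(O)}$, and multiplying by the expected per-RB rate gives (\ref{eq:rate-overlay-celli}).

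Third, for a typical D2D link of type $j$ I would condition on the operating mode selected by time hopping. With probability $p_{t_j}$ the link is active (D2D mode): it would like $b_{D_j}$ subbands, but frequency hopping with probability $p_{f_j}$ over the $\theta B$ subbands dedicated to D2D lets it occupy on average $p_{f_j}\theta B$ subbands, so its effective share is $\min\{p_{f_j}\theta B,\,b_{D_j}\}$ and its SINR follows $\mathbb{P}_D^{(O)}$. With probability $1-p_{t_j}$ the link is in cellular mode: it is then served by the BS exactly like a cellular user of demand $b_{D_j}$ (admission probability $p_a^{(O)}$, SINR distribution $\mathbb{P}_C^{(O)}$), but pays the penalty $w$ for using both uplink and downlink resources, yielding an effective share $\frac{b_{D_j}}{w}(1-p_{t_j})p_a^{(O)}$. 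Adding the two contributions and applying the log-identity term by term gives (\ref{eq:rate-overlay-d2di}). Equation (\ref{eq:rate-d2donly}) is then just the spatial average over the $M$ independent D2D PPPs: a uniformly chosen active D2D link is of type $j$ with probability $\lambda_{D_j}/\lambda_D$, so the overall rate is the corresponding convex combination of the $R_{D_j}^{(O)}$.

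I expect the genuine content — and the place where the word ``approximately'' is doing work — to be the justification of the effective resource-block counts, namely the admission probability $p_a^{(O)}$ and the cap $\min\{p_{f_j}\theta B,\,b_{D_j}\}$. The former rests on the mean-load-per-cell heuristic together with the Feller-paradox correction ($9/(7\lambda_B)$ for the typical user's cell) invoked just before (\ref{eq:pa-overlay}); the latter replaces the random number of subbands a hopping link actually occupies by its mean $p_{f_j}\theta B$, truncated at the demand. Both are mean-field-style approximations rather than exact identities, and a fully rigorous argument would need to control the fluctuations of the per-cell user counts and of the random subband occupancy — which the paper instead validates numerically in Section \ref{sec:simulation}. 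The remaining steps are routine substitutions.
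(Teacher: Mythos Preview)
Your proposal is correct and follows essentially the same approach as the paper: condition on the operating mode, multiply the expected spectral efficiency $\mathbb{E}[\log_2(1+\sinr)]$ by the appropriate effective resource share, and convert the expectation into a tail integral of the coverage probability. The only cosmetic differences are that the paper obtains the integral identity via the substitution $\beta=2^t-1$ in $\int_0^\infty\mathbb{P}(\sinr>2^t-1)\,dt$ rather than your Fubini argument, and it explicitly records the approximation $\mathbb{E}[R_{\text{cellular mode}}^{(O)}]\approx\mathbb{E}[R_{\text{cellular mode in DL}}^{(O)}]$ (i.e., replacing the minimum of uplink and downlink rates by the downlink rate alone), which you subsume into the role of the penalty $w$.
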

\begin{proof}
According to Shannon's capacity formula, the long term rate of a typical cellular user is
\begin{equation*}
\begin{aligned}
R_{C_i}^{(O)}=&b_{C_i}p_a^{(O)}\mathbb{E}\left[\log_2\left(1+\sinr \right)\right]\\
=&b_{C_i}p_a^{(O)}\int_0^\infty\mathbb{P}\left(\sinr>2^t-1\right)dt\\
=&b_{C_i}p_a^{(O)}\int_0^\infty\frac{\log_2(e)}{(\beta+1)}\mathbb{P}\left(\sinr>\beta\right)d\beta,
\end{aligned}
\end{equation*}
where  we let $2^t-1=\beta$ in the last equality. 

A typical D2D link can be either in D2D mode or cellular mode, and thus the average rate of a typical D2D link can be calculated according to 
\begin{equation*}
\begin{aligned}
R_{D_j}^{(O)}&=\mathbb{P}(\text{D2D mode})\mathbb{E}\left[R^{(O)}_{\text{D2D mode}}\right]\\
&+\mathbb{P}(\text{cellular mode})\frac{1}{w}\mathbb{E}\left[R^{(O)}_{\text{cellular mode}}\right],
\end{aligned}
\end{equation*}
where we approximate the rate obtained in  cellular mode  by the rate  obtained in the downlink system  for tractability: 
\begin{equation*}
\begin{aligned}
&\mathbb{E}\left[R^{(O)}_{\text{cellular mode}}\right] \\
 =&\min\left\{\mathbb{E}\left[R^{(O)}_{\text{cellular mode in DL}}\right], \mathbb{E}\left[R^{(O)}_{\text{cellular mode in UL}}\right]\right\}\\
\approx &\mathbb{E}\left[R^{(O)}_{\text{cellular mode in DL}}\right].
\end{aligned}
\end{equation*}
Then the D2D rates in D2D mode and in cellular mode are, respectively,
\begin{equation}\label{eq:d2dmoderate}
\begin{aligned}
\mathbb{E}\left[R^{(O)}_{\text{D2D mode}}\right] &=\min\{p_{f_j} \theta B,b_{D_j}\}\int_0^\infty\frac{\log_2(e)}{(\beta+1)}\mathbb{P}_D^{(O)}\left(\beta\right)d\beta,
\end{aligned}
\end{equation}
and 
\begin{equation}\label{eq:cellmoderate}
\begin{aligned}
\mathbb{E}\left[R^{(O)}_{\text{cellular mode}}\right]&=b_{D_j} p_a^{(O)}\int_0^\infty\frac{\log_2(e)}{(\beta+1)}\mathbb{P}_C^{(O)}\left(\beta\right)d\beta.
\end{aligned}
\end{equation}
%
Note that in rate derivation in this paper, we assume that the number of users associated with the BS serving the typical link and the SINR distribution of the typical link are independent, and thus plugging~(\ref{eq:cdf-d2donly}) and (\ref{eq:cdf-cellonly}) into (\ref{eq:d2dmoderate}) and (\ref{eq:cellmoderate}), respectively, the proof is complete.
\end{proof}


\section{Analysis of the Shared Network}\label{sec:underlay}
In this section, we turn our attention to  the shared network, where the time and frequency slots are reused between D2D and cellular networks, and thus there is D2D-cellular interference.

\vspace{-0.5cm}
\subsection{SINR Distribution of D2D links} 
As in Section \ref{sec:overlay}, let $\tphi_D$ be the set of interfering D2D links, which is a PPP with density $\tilde{\lambda}_D$. The average fraction of normal RBs in the shared network is approximated by
\begin{equation}
\rho^{(S)} \approx \min\{\frac{b_C\lambda_U+\sum_{i=1}^{M}b_{D_i}(1-p_{t_i})\lambda_{D_i}}{\lambda_BB}, 1\}.
\end{equation}

We again consider a typical active D2D receiver located at the origin. Taking into account now the interference from both cellular and D2D networks, the SINR of a typical active D2D link is
\begin{equation*}
\textrm{SINR}_D=\frac{P_{D} h_0 |X_0|^{-\alpha}}{I_{\tphi_D}+I_{\tphi_B}+\sigma^2},
\end{equation*}
where $X_0$ is the location of the typical transmitter, the interference from D2D transmitters is $I_{\tphi_D}=\sum_{X_i\in\tphi_D\setminus X_0} P_{D} h_i |X_i|^{-\alpha}$, and the interference from BSs is $I_{\tphi_B}=\sum_{B_i\in \tphi_{B}} P_{B} h_i |B_i|^{-\alpha}$.

\begin{theo}\label{theo:cdf-cd-d2d}
The SINR distribution of an active D2D link in the shared network is 
\begin{equation}\label{eq:cdf-cd-d2d}
\begin{aligned}
&\mathbb{P}_D^{(S)} (\beta)\overset{\triangle}{=}\mathbb{P}\left(\text{SINR}_D>\beta\right)\\
=&\int_0^\infty e^{-s\sigma^2}  \mathcal{L}_{I_{\tphi_D}}(s)\mathcal{L}_{I_{\tphi_B}}(s)\frac{v}{\delta^2}\exp{(-\frac{v^2}{2\delta^2})}dv,
\end{aligned}
\end{equation}
where $s=\beta P_{D}^{-1} v^{\alpha}$, $\mathcal{L}_{I_{\tphi_D}}(s)$ can be calculated according to~(\ref{eq:lap-d2donly}), and
\begin{equation}\label{eq:lap-cd-bd}
\mathcal{L}_{I_{\tphi_B}}(s)=\exp\left( -2\pi\rho^{(S)}\lambda_B v^{2} H_0(\beta, \alpha)\right),
\end{equation}
where $H_0(\beta, \alpha)=\int_0^\infty \frac{x}{1+\beta^{-1}P_{D}/P_{B}x^{\alpha}}dx$.
\end{theo}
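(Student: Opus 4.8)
The plan is to reuse the conditioning scheme of Proposition~\ref{theo:cdf-d2d}, but now carrying one additional Laplace-transform factor for the base-station interference. First I would condition on the transmitter--receiver distance $|X_0| = v$, which has density $\frac{v}{\delta^2}e^{-v^2/(2\delta^2)}$. Since $h_0 \sim \mathrm{Exp}(1)$, the conditional coverage event $\{h_0 > \beta P_D^{-1} v^\alpha (I_{\tphi_D} + I_{\tphi_B} + \sigma^2)\}$ has probability $e^{-s\sigma^2}\,\mathbb{E}\!\left[e^{-s I_{\tphi_D}}\right]\mathbb{E}\!\left[e^{-s I_{\tphi_B}}\right]$ with $s = \beta P_D^{-1} v^\alpha$; the product form here relies on $\tphi_D$ and $\tphi_B$ being independent (thinnings of independent PPPs, with $\tphi_B$ approximated by a PPP of density $\rho^{(S)}\lambda_B$ exactly as in Section~\ref{sec:overlay}) and on the fading marks being i.i.d.\ and independent of the point processes. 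Deconditioning on $v$ against its Rayleigh density then produces the integral in~(\ref{eq:cdf-cd-d2d}), once the two Laplace transforms are identified.

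The D2D-interference factor $\mathcal{L}_{I_{\tphi_D}}(s)$ is verbatim the one already derived in~(\ref{eq:lap-d2donly}): $\tphi_D$ is still the PPP of density $\tilde\lambda_D$, and the typical receiver still sees no exclusion region for D2D interferers. For the base-station factor I would invoke the PGFL of the PPP $\tphi_B$ of density $\rho^{(S)}\lambda_B$, so that $\mathcal{L}_{I_{\tphi_B}}(s) = \exp\!\bigl(-2\pi\rho^{(S)}\lambda_B \int_0^\infty (1 - \frac{1}{1 + s P_B u^{-\alpha}})\,u\,du\bigr) = \exp\!\bigl(-2\pi\rho^{(S)}\lambda_B \int_0^\infty \frac{u}{1 + s^{-1}P_B^{-1}u^\alpha}\,du\bigr)$. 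Substituting $s = \beta P_D^{-1} v^\alpha$ and changing variables $x = u/v$ extracts a factor $v^2$ and leaves $H_0(\beta,\alpha) = \int_0^\infty \frac{x}{1+\beta^{-1}(P_D/P_B)x^\alpha}\,dx$, which is exactly~(\ref{eq:lap-cd-bd}).

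The one structural point to get right — and the main (mild) obstacle — is the lower limit of the BS-interference integral. Unlike the cellular user of Proposition~\ref{prop:sinr-cellonly}, who is served by the nearest BS and therefore sees interfering BSs only beyond the serving distance $r$ (yielding $H_1$ with lower limit $1$), the typical D2D receiver has no association constraint with respect to BSs, so an interfering BS may lie arbitrarily close to it; hence the integral runs from $0$ and one obtains $H_0$ rather than $H_1$. Convergence of $H_0$ at the origin is immediate, while convergence at infinity requires $\alpha > 2$, the standard path-loss assumption; this is the only condition one must flag. The remaining steps — Fubini to swap the $v$-integral with the expectation, and the elementary change of variables — are routine.
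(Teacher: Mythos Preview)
Your proposal is correct and follows essentially the same route as the paper's proof: condition on the link distance $v$, use the exponential tail of $h_0$ and the independence of $\tphi_D$ and $\tphi_B$ to factor into the product of Laplace transforms, reuse~(\ref{eq:lap-d2donly}) for $\mathcal{L}_{I_{\tphi_D}}$, apply the PGFL to $\tphi_B$ with lower limit $0$, and change variables $x=u/v$ to obtain~(\ref{eq:lap-cd-bd}). Your explicit remark distinguishing $H_0$ from $H_1$ via the absence of a nearest-BS exclusion zone, and the $\alpha>2$ integrability check, are helpful clarifications that the paper leaves implicit.
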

\begin{proof}
Given the distance between D2D transmitter and its receiver, denoted by $v$, the conditional coverage probability  is $\mathbb{P}(\sinr (v)>\beta \mid v)\overset{(a)}{=} \exp(-s\sigma^2)\mathcal{L}_{I_{\tphi_D}}(s)\mathcal{L}_{I_{\tphi_B}}(s)$,
where $(a)$ follows from the fact that $h_0$ is Rayleigh fading and  $I_{\tphi_D}$ is independent of $I_{\tphi_B}$. The Laplace transform of $I_{\tphi_D}$ can be calculated according to (\ref{eq:lap-d2donly}). 
Similarly, we have
\begin{equation*}
\begin{aligned}
\mathcal{L}_{I_{\tphi_B}}(s)=&\mathbb{E}\left[ \prod _{Z_i\in{\tphi_B}} \frac{1}{1+sP_{B} |B_i|^{-\alpha}}\right]\\
=&\exp\left(  \int_0^\infty \frac{-2\pi\rho^{(S)}\lambda_B r}{1+\beta^{-1}P_{D}/P_{B}(r/v)^{\alpha}}dr\right)\\
=&\exp\left( \int_0^\infty \frac{-2\pi\rho^{(S)}\lambda_B v^{2}  x}{1+\beta^{-1}P_{D}/P_{B}x^{\alpha}}dx\right),
\end{aligned}
\end{equation*}
where the last equality is obtained by letting $x=r/v$. Letting $H_0(\beta, \alpha)=\int_0^\infty \frac{x}{1+\beta^{-1}P_{D}/P_{B}x^{\alpha}}dx$, the proof is complete.
\end{proof}

Theorem \ref{theo:cdf-cd-d2d} shows that for any given SINR threshold, the coverage probability of D2D links is monotonically decreasing as the access probabilities $p_{t_i}$ and/or $p_{f_i}$ increase, due to the increasing interference from the D2D network. On the other hand, the relationship between the MAC protocol and average rate is more subtle, and is discussed in Section~\ref{sec:rate}.
\vspace{-0.5cm}
\subsection{SINR Distribution of Cellular Users}
\vspace{-0.1cm}
In this section, we conduct analysis on a typical cellular  UE  in the shared network,  which is assumed to be located at the origin. 

\begin{theo}\label{theo:cdf-cd-cell}
The SINR distribution of a typical cellular user in the shared network is 
\begin{equation}\label{eq:cdf-cd-cell}
\begin{aligned}
&\mathbb{P}_C^{(S)}(\beta)\overset{\triangle}{=}\mathbb{P}\left(\text{SINR}>\beta\right)\\
=&\int_0^\infty e^{-s\sigma^2}  \mathcal{L}_{I_{\tphi_D}}(s)\mathcal{L}_{I_{\tphi_B}}(s) e^{-\lambda_B\pi r^2}2\lambda_B\pi rdr,
\end{aligned}
\end{equation}
where $s=\beta P_{B}^{-1} r^{\alpha}$, and the Laplace transform of interference from D2D links $\mathcal{L}_{I_{\tphi_D}}(s)$ can be calculated according to~(\ref{eq:lap-d2donly}). The Laplace transform of interference from the cellular network $\mathcal{L}_{I_{\tphi_B}}(s) $ can be obtained by  (\ref{eq:lap-cd-bb}), where $\rho^{(O)}$ should be replaced by $\rho^{(S)}$.
\end{theo}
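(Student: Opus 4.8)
The plan is to reuse, essentially verbatim, the conditioning-plus-PGFL argument behind Proposition~\ref{prop:sinr-cellonly} and Theorem~\ref{theo:cdf-cd-d2d}, the only new feature being that the typical cellular UE now sees \emph{two} independent interference fields simultaneously. I would place the typical UE at the origin and condition on the distance $r$ to its serving BS $B_0$. Under nearest-BS association this distance has density $f_r(r)=2\lambda_B\pi r\,e^{-\lambda_B\pi r^2}$~\cite{StoKen87}, and by the reduced Palm characterization of a PPP, conditioned on $\{|B_0|=r\}$ the remaining interfering (thinned) BSs form a PPP on $\mathbb{R}^2\setminus B(0,r)$ with intensity $\rho^{(S)}\lambda_B$, while the interfering D2D transmitters $\tphi_D$ are an \emph{independent} PPP on all of $\mathbb{R}^2$ with intensity $\tilde\lambda_D$ — with no exclusion zone, since nothing about the cellular UE removes nearby D2D transmitters.

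Next, since the desired-signal fading $h_0$ is unit-mean exponential, the conditional coverage probability is a tail probability of $h_0$, which converts into a Laplace transform:
\[
\mathbb{P}(\sinr>\beta\mid r)=\mathbb{P}\!\left(h_0>\tfrac{\beta r^\alpha}{P_B}\big(I_{\tphi_D}+I_{\tphi_B}+\sigma^2\big)\Big|r\right)=e^{-s\sigma^2}\,\mathbb{E}\!\left[e^{-sI_{\tphi_D}}\right]\mathbb{E}\!\left[e^{-sI_{\tphi_B}}\right],
\]
with $s=\beta P_B^{-1}r^\alpha$, where the product factorization uses the independence of $\tphi_D$ and $\tphi_B$. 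For $\mathcal{L}_{I_{\tphi_D}}(s)=\mathbb{E}[e^{-sI_{\tphi_D}}]$ I would apply the PGFL of a PPP over $\mathbb{R}^2$ and average the Rayleigh fading inside, which is exactly the computation that yielded~(\ref{eq:lap-d2donly}). For $\mathcal{L}_{I_{\tphi_B}}(s)$ I would repeat the calculation in the proof of Proposition~\ref{prop:sinr-cellonly} — PGFL over $\mathbb{R}^2\setminus B(0,r)$ followed by the substitution $x=u/r$ — with the single change that the active-BS thinning factor is $\rho^{(S)}$ rather than $\rho^{(O)}$, giving~(\ref{eq:lap-cd-bb}) with $\rho^{(O)}\mapsto\rho^{(S)}$. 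De-conditioning by integrating the displayed expression against $f_r(r)$ over $r\in(0,\infty)$ then gives~(\ref{eq:cdf-cd-cell}).

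There is no genuine obstacle here; the two points worth a word are (i) the correlation between $r$ and the interfering-BS locations — declaring $B_0$ the nearest BS forces every other point of $\Phi_B$ outside $B(0,r)$ — which the reduced Palm description of the PPP handles exactly, and (ii) the fact that the argument rests on the already-introduced approximations that the BSs active on a tagged RB form a PPP of intensity $\rho^{(S)}\lambda_B$ and that $\tphi_D$ is the corresponding thinning of $\bigcup_i\Phi_{D_i}$; these are precisely the modeling approximations of Section~\ref{sec:model-load}, validated numerically in Section~\ref{sec:simulation}, so nothing new must be justified. Everything else is the standard PPP shot-noise calculus already carried out in~(\ref{eq:lap-cd-bb}) and~(\ref{eq:lap-d2donly}).
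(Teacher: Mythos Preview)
Your proposal is correct and follows precisely the approach the paper intends: the paper in fact omits the proof, stating only that it is similar to that of Theorem~\ref{theo:cdf-cd-d2d}, and your conditioning-on-$r$, Rayleigh-fading-to-Laplace-transform, independence-factorization, and PGFL steps are exactly the argument of Theorem~\ref{theo:cdf-cd-d2d} and Proposition~\ref{prop:sinr-cellonly} combined. Nothing is missing or different.
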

\vspace{-0.4cm}
\begin{proof}
We omit the proof as it is similar to the proof of Theorem \ref{theo:cdf-cd-d2d}. 
\end{proof}

Ignoring thermal noise, the results can be again significantly simplified.
\begin{cor}\label{cor:cdf-cd-spec}
When $\sigma^2\rightarrow 0$, the SINR distributions of D2D links and of cellular users are, respectively,
\begin{equation}\label{eq:cdf-cd-dspec}
\begin{aligned}
\mathbb{P}_{D}^{(S)}(\beta)=\frac{1}{ 2\delta^2 \tilde{\lambda}_D\kappa \pi \beta^{\frac{2}{\alpha}}  +4\delta^2\pi\rho^{(S)}\lambda_B  H_0(\beta, \alpha)+1},
\end{aligned}
\end{equation}
and
\begin{equation}\label{eq:cdf-cd-cspec}
\begin{aligned}
\mathbb{P}_{C}^{(S)}(\beta)=\frac{1 }{\frac{\tilde{\lambda}_D}{\lambda_B} \kappa (\beta \frac{P_D}{P_B})^{\frac{2}{\alpha}} + 2 \rho^{(S)} H_1(\beta,\alpha) +1},
\end{aligned}
\end{equation}
where $\kappa=\frac{2\pi /\alpha}{\sin\left(2\pi/\alpha\right)}$.
\end{cor}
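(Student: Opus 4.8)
The plan is to specialize Theorems~\ref{theo:cdf-cd-d2d} and~\ref{theo:cdf-cd-cell} to the noiseless regime and evaluate the remaining integrals in closed form. First I would set $e^{-s\sigma^2}\to 1$ in both~(\ref{eq:cdf-cd-d2d}) and~(\ref{eq:cdf-cd-cell}), so that each coverage probability becomes a single integral of a product of two Laplace transforms against either the Rayleigh transmitter--receiver density $\frac{v}{\delta^2}e^{-v^2/(2\delta^2)}$ or the nearest-BS density $2\pi\lambda_B r\,e^{-\lambda_B\pi r^2}$.

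For the D2D link I would substitute $s=\beta P_D^{-1}v^\alpha$ into~(\ref{eq:lap-d2donly}), using $(sP_D)^{2/\alpha}=\beta^{2/\alpha}v^2$ and the identity $\frac{2\pi^2/\alpha}{\sin(2\pi/\alpha)}=\pi\kappa$ with $\kappa=\frac{2\pi/\alpha}{\sin(2\pi/\alpha)}$, so that $\mathcal{L}_{I_{\tphi_D}}(s)=\exp(-\pi\kappa\tilde{\lambda}_D\beta^{2/\alpha}v^2)$; together with $\mathcal{L}_{I_{\tphi_B}}(s)=\exp(-2\pi\rho^{(S)}\lambda_B v^2 H_0(\beta,\alpha))$ from~(\ref{eq:lap-cd-bd}), the integrand collapses to $\frac{v}{\delta^2}e^{-av^2}$ with $a=\pi\kappa\tilde{\lambda}_D\beta^{2/\alpha}+2\pi\rho^{(S)}\lambda_B H_0(\beta,\alpha)+\frac{1}{2\delta^2}$. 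Applying $\int_0^\infty v\,e^{-av^2}\,dv=\frac{1}{2a}$ and multiplying through by $2\delta^2$ produces~(\ref{eq:cdf-cd-dspec}).

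For the cellular user I would instead take $s=\beta P_B^{-1}r^\alpha$, which turns~(\ref{eq:lap-d2donly}) into $\mathcal{L}_{I_{\tphi_D}}(s)=\exp(-\pi\kappa\tilde{\lambda}_D(\beta P_D/P_B)^{2/\alpha}r^2)$ and, by~(\ref{eq:lap-cd-bb}) with $\rho^{(O)}$ replaced by $\rho^{(S)}$, gives $\mathcal{L}_{I_{\tphi_B}}(s)=\exp(-2\pi\rho^{(S)}\lambda_B r^2 H_1(\beta,\alpha))$. The integrand in~(\ref{eq:cdf-cd-cell}) then reduces to $2\pi\lambda_B r\,e^{-br^2}$ with $b=\pi\kappa\tilde{\lambda}_D(\beta P_D/P_B)^{2/\alpha}+2\pi\rho^{(S)}\lambda_B H_1(\beta,\alpha)+\pi\lambda_B$; the same Gaussian moment $\int_0^\infty r\,e^{-br^2}\,dr=\frac{1}{2b}$ gives $\pi\lambda_B/b$, and dividing numerator and denominator by $\pi\lambda_B$ yields~(\ref{eq:cdf-cd-cspec}).

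Since every step is a substitution followed by an elementary Gaussian-type moment integral, I do not expect any genuine obstacle. The only points requiring care are tracking the exponent arithmetic $(sP_D)^{2/\alpha}$ under the two different choices of $s$ (so that the $\beta^{2/\alpha}$ factor appears with $v^2$ in one case and with $(P_D/P_B)^{2/\alpha}r^2$ in the other), and correctly pulling the constant $\kappa$ out of the D2D Laplace transform~(\ref{eq:lap-d2donly}) so that the factors of $\pi$ match those in the claimed expressions.
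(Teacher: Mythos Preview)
Your proposal is correct and follows essentially the same approach as the paper: set $\sigma^2\to 0$, substitute the two Laplace transforms so that the integrand becomes $v\,e^{-av^2}$ (respectively $r\,e^{-br^2}$), and evaluate via the elementary Gaussian-type integral. The paper phrases the last step as the change of variables $x=v^2$ rather than quoting $\int_0^\infty v\,e^{-av^2}\,dv=\frac{1}{2a}$ directly, but this is the same computation.
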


\begin{proof}
See Appendix \ref{pf-cor:cdf-cd-spec}.
\end{proof}

According to the above analysis, we can see that the coverage probabilities of  D2D links and of cellular UEs are both monotonically decreasing functions of $p_{t_i}$ and $p_{f_i}$, due to the increasing interference as more D2D links access to the same resource block. 

\vspace{-0.5cm}
\subsection{Rate Analysis in the Shared Network }\label{sec:rate}
Similar to the dedicated system, the admission probability of cellular users  is
\begin{equation}\label{eq:pa-underlay}
p_a^{(S)}\approx\min\left\{\frac{7B\lambda_B}{9\left(b_C\lambda_U+\sum_{i=1}^{M}b_{D_i}(1-p_{t_i})\lambda_{D_i}\right)},\  1\right\}.
\end{equation}

The average rates of cellular users and D2D links in the shared network are given in Proposition \ref{theo:rate-underlay}.
\begin{prop}\label{theo:rate-underlay}
The average achievable rate of a cellular user is
\begin{equation}\label{eq:rate-underlay-celli}
R_{C}^{(S)}= b_{C}p_a^{(S)}\int_0^\infty\frac{\log_2(e)}{(\beta+1)}\mathbb{P}_C^{(S)}\left(\beta\right)d\beta,
\end{equation}
where  $\mathbb{P}_C^{(S)}\left(\beta\right) $ is given by (\ref{eq:cdf-cd-cell}).
The average achievable rate of a type $i$ D2D link  is
\begin{equation}\label{eq:rate-underlay-d2di}
\begin{aligned}
R_{D_i}^{(S)}&=p_{t_i}\min\{p_{f_i}B,b_{D_i}\}\int_0^\infty\frac{\log_2(e)}{(\beta+1)}\mathbb{P}_D^{(S)}\left(\beta\right)  d\beta \\
&+\frac{b_{D_i}}{b_C w}(1-p_{t_i})R_{C}^{(S)},
\end{aligned}
\end{equation}
where $w$ is the penalty for the potential D2D links transmitting by BSs, and $\mathbb{P}_D^{(S)}\left(\beta\right) $ is given by (\ref{eq:cdf-cd-d2d}).
\end{prop}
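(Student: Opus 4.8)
The plan is to mirror the argument already used for Theorem~\ref{theo:rate-overlay}, since the only structural changes from the dedicated to the shared network are that D2D links in D2D mode now have all $B$ subbands available (rather than $\theta B$) and that the active-BS thinning factor is $\rho^{(S)}$ instead of $\rho^{(O)}$; crucially, the extra D2D--cellular interference is already absorbed into the CCDFs $\mathbb{P}_D^{(S)}$ and $\mathbb{P}_C^{(S)}$ through Theorems~\ref{theo:cdf-cd-d2d} and~\ref{theo:cdf-cd-cell}, so nothing new is needed at the interference level.

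First, for the cellular user: treating interference as noise and applying Shannon's formula over the admitted fraction of time slots $p_a^{(S)}$ and over the $b_C$ requested subbands gives $R_C^{(S)} = b_C\, p_a^{(S)}\,\mathbb{E}\left[\log_2(1+\textrm{SINR})\right]$, where the expectation is taken under the same independence assumption invoked in the proof of Theorem~\ref{theo:rate-overlay}, namely that the number of UEs sharing the typical user's cell is independent of that user's SINR. I would then use the tail-integral identity $\mathbb{E}\left[\log_2(1+\textrm{SINR})\right] = \int_0^\infty \mathbb{P}(\textrm{SINR} > 2^t-1)\,dt$ together with the change of variable $\beta = 2^t-1$ (so $dt = \log_2(e)/(\beta+1)\,d\beta$), which turns this into $\int_0^\infty \frac{\log_2(e)}{\beta+1}\mathbb{P}_C^{(S)}(\beta)\,d\beta$ and yields~(\ref{eq:rate-underlay-celli}).

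Second, for a type-$i$ D2D link I would condition on its operating mode. With probability $p_{t_i}$ it is in D2D mode, in which case it occupies $\min\{p_{f_i}B, b_{D_i}\}$ subbands and, repeating the tail-integral computation above with $\mathbb{P}_D^{(S)}$ in place of $\mathbb{P}_C^{(S)}$, achieves expected rate $\min\{p_{f_i}B, b_{D_i}\}\int_0^\infty \frac{\log_2(e)}{\beta+1}\mathbb{P}_D^{(S)}(\beta)\,d\beta$. With probability $1-p_{t_i}$ it is in cellular mode; as in Theorem~\ref{theo:rate-overlay} I would approximate the rate it then obtains by the downlink rate of a cellular UE requesting $b_{D_i}$ subbands, i.e.\ $b_{D_i}\, p_a^{(S)}\int_0^\infty \frac{\log_2(e)}{\beta+1}\mathbb{P}_C^{(S)}(\beta)\,d\beta = (b_{D_i}/b_C)\,R_C^{(S)}$, and then apply the penalty factor $1/w$ to account for using both uplink and downlink resources. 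Summing the two mode contributions gives~(\ref{eq:rate-underlay-d2di}).

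The routine parts here are the tail-integral identity and the logarithmic change of variables; the real content — and the place where exactness is replaced by approximation — lies in the load/SINR independence assumption and in the replacement of the cellular-mode rate by its downlink counterpart (dropping the $\min$ with the uplink rate), exactly the same two simplifications made in the dedicated-network proof. I would state these explicitly and otherwise let the computation run in parallel with the proof of Theorem~\ref{theo:rate-overlay}.
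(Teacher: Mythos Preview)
Your proposal is correct and matches the paper's approach exactly: the paper states Proposition~\ref{theo:rate-underlay} without a separate proof, precisely because the argument is a direct replay of the proof of Theorem~\ref{theo:rate-overlay} with $\theta B$ replaced by $B$, $\rho^{(O)}$ by $\rho^{(S)}$, and the dedicated-network CCDFs by $\mathbb{P}_D^{(S)}$ and $\mathbb{P}_C^{(S)}$. You have identified all of the substantive ingredients---the tail-integral identity, the $\beta=2^t-1$ substitution, the mode conditioning, the downlink approximation for the cellular-mode rate, the $1/w$ penalty, and the load/SINR independence assumption---and there is nothing to add.
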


Then we can get the average rate of  a typical D2D link as $R_D^{(S)} = \sum_{i=1}^{M} \frac{\lambda_{D_i}}{\lambda_D}R_{D_i}^{(S)}$.

\begin{cor}\label{cor:rateLB}
We further have lower bounds on the rates
\begin{equation}
R_C ^{(S)}\geq R_{Cl}^{(S)}=\sup_\beta\  b_Cp_a^{(S)} \log_2(1+\beta)\mathbb{P}_C^{(S)}\left( \beta\right),
\end{equation}
and $R_{D_i}^{(S)} \geq R_{Dl_i}^{(S)}$, where 
\begin{equation}
\begin{aligned}
 R_{Dl_i}^{(S)}&=\sup_\beta \left(p_{t_i}\min\{p_{f_i}B,b_{D_i}\}\log_2(\beta+1)\mathbb{P}_D^{(S)}\left(\beta\right)  \right)\\
&+\frac{b_{D_j}}{b_C w}(1-p_{t_i}) R_{Cl}^{(S)} ,
\end{aligned}
\end{equation}
where $\mathbb{P}_C^{(S)}(\beta)$ and $\mathbb{P}_D^{(S)}(\beta)$ can be calculated according to (\ref{eq:cdf-cd-cell}) and (\ref{eq:cdf-cd-d2d}), respectively.
\end{cor}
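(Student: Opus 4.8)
The plan is to use the elementary fact that $x\mapsto\log_2(1+x)$ is nondecreasing, so that for every fixed threshold $\beta\ge 0$ we have the pointwise bound $\log_2(1+\sinr)\ge\log_2(1+\beta)\,\mathbf{1}\{\sinr>\beta\}$. Taking expectations and recalling from the proof of Proposition \ref{theo:rate-underlay} that $R_C^{(S)}=b_Cp_a^{(S)}\,\mathbb{E}\!\left[\log_2(1+\sinr)\right]$ with $\mathbb{P}(\sinr>\beta)=\mathbb{P}_C^{(S)}(\beta)$, this yields $R_C^{(S)}\ge b_Cp_a^{(S)}\log_2(1+\beta)\,\mathbb{P}_C^{(S)}(\beta)$ for every $\beta$. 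Since $\beta$ is arbitrary, taking the supremum over $\beta$ gives $R_C^{(S)}\ge R_{Cl}^{(S)}$. This is the only genuine idea in the proof; everything else is bookkeeping.

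For the D2D rate I would split $R_{D_i}^{(S)}$ into the two summands appearing in Proposition \ref{theo:rate-underlay}. The first summand equals $p_{t_i}\min\{p_{f_i}B,b_{D_i}\}\,\mathbb{E}\!\left[\log_2(1+\sinr_D)\right]$, with $\mathbb{P}(\sinr_D>\beta)=\mathbb{P}_D^{(S)}(\beta)$; applying the same pointwise inequality and then taking $\sup_\beta$ lower-bounds it by $\sup_\beta\bigl(p_{t_i}\min\{p_{f_i}B,b_{D_i}\}\log_2(1+\beta)\,\mathbb{P}_D^{(S)}(\beta)\bigr)$. The second summand, $\frac{b_{D_i}}{b_Cw}(1-p_{t_i})R_C^{(S)}$, is at least $\frac{b_{D_i}}{b_Cw}(1-p_{t_i})R_{Cl}^{(S)}$ by the bound just established (using $1-p_{t_i}\ge 0$). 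Adding the two lower bounds gives a lower bound on their sum, which is exactly the claimed $R_{Dl_i}^{(S)}$.

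There is essentially no obstacle: the argument is a Markov-type inequality applied twice, plus monotonicity to push the cellular bound through the $(1-p_{t_i})$-weighted relay term. The one point worth flagging is that the supremum over $\beta$ is taken separately in the two summands -- a different optimizer is allowed in the D2D term and inside $R_{Cl}^{(S)}$ -- so the bound is valid but not necessarily tight; this is harmless since the corollary claims only an inequality, and in the interference-limited regime $\mathbb{P}_C^{(S)}$ and $\mathbb{P}_D^{(S)}$ are the closed forms of Corollary \ref{cor:cdf-cd-spec}, making each one-dimensional maximization over $\beta$ immediate to evaluate. (The $b_{D_j}$ in the displayed expression for $R_{Dl_i}^{(S)}$ should read $b_{D_i}$.)
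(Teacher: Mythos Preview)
Your proof is correct and takes essentially the same approach as the paper: both derive the key inequality $\mathbb{E}[\log_2(1+\Gamma)]\ge \log_2(1+\beta)\,\mathbb{P}(\Gamma>\beta)$ for every $\beta$ and then take the supremum. The paper phrases it via conditioning on $\{\Gamma>\beta\}$ and dropping the nonnegative complementary term, whereas you use the equivalent pointwise indicator bound $\log_2(1+\Gamma)\ge\log_2(1+\beta)\,\mathbf{1}\{\Gamma>\beta\}$; these are two ways of writing the same step, and your explicit handling of the two summands in $R_{D_i}^{(S)}$ (including the observation about separate optimizers and the $b_{D_j}$ typo) is a bit more detailed than what the paper records.
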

\begin{proof}
Denoting $\Gamma=\sinr$, we have the following inequality for any $\beta$.
\begin{equation*}
\begin{aligned}
\mathbb{E}\left[\log_2(1+\Gamma)\right]=&\mathbb{P}(\Gamma>\beta)\mathbb{E}\left[\log_2(1+\Gamma)\mid \Gamma>\beta\right]\\
&+\mathbb{P}(\Gamma\leq\beta)\mathbb{E}\left[\log_2(1+\Gamma)\mid \Gamma\leq\beta\right]\\
\geq & \mathbb{P}(\Gamma>\beta)\mathbb{E}\left[\log_2(1+\Gamma)\mid \Gamma>\beta\right]\\
\geq &\mathbb{P}(\Gamma>\beta) \log_2(1+\beta).
\end{aligned}
\end{equation*}
Therefore, we have $\mathbb{E}\left[\log_2(1+\sinr)\right]\geq \sup_\beta  \mathbb{P}(\sinr>\beta) \log_2(1+\beta)$.
\end{proof}
The above lower bounds can also be extended to the dedicated network.

According to the above analysis, when the frequency hopping probability $p_{f_i}$ increases, the rate of a typical cellular user decreases, because the interference from D2D links increases. Thus the cellular rate is a monotonic function of $p_{f_i}$. As for the time hopping, when the time hopping probability $p_{t_i}$ increases, more potential D2D links would operate in  D2D mode. On the one hand, the interference from D2D links increases as time hopping probability increases, which leads to a lower SINR for the cellular links; on the other hand, the cellular links benefit from D2D offloading, since more resources would be available for the remaining cellular links. Therefore, it is difficult to determine the impact of time hopping on the rate of cellular users.
As for the rate of a typical  D2D link, it is even more difficult to explore the impact of the MAC protocol, because both time and frequency hopping result in the tradeoff between  resource efficiency and additional interference. It is not {\it a priori} clear whether larger time and frequency hopping probabilities would be beneficial or not.
However, by changing variables, we can get the optimal solution of at least one variable and thus reduce the dimensions of the optimization problem. We explore these issues in detail in the next section.

 \section{Optimization of the D2D-enabled Cellular Network}\label{sec:opt}
Based on the derived analytical results, we now turn our attention to the optimization of network performance. As in \cite{WebYan05,Bac09,HuaLau09}, we study the utility maximization in terms of average long-term rates in  the interference limited network (i.e., $\sigma^2 \rightarrow 0$) for simplicity. 
 
\vspace{-0.4cm}
\subsection{Optimization of the Dedicated Network}
 The utility functions of a cellular user and of a type $i$ D2D link are denoted by $U_C(R_{C})$ and $U_D(R_{D_i})$, respectively, where $U_D(\cdot)$ and $U_C(\cdot)$ are continuously differentiable, non-decreasing, and concave functions~\cite{StaWic09}. The optimization problem can be formulated as
\begin{equation}\label{eq:opt-utility-overlay}
\begin{aligned}
\max\limits_{p_t, p_f} \quad & \sum_{i=1}^{M}\lambda_{D_i} U_D(R_{D_i}^{(O)})+\lambda_U U_C(R_C^{(O)})\\
\text{s.t. } \quad & 0\leq p_{t_i} \leq 1, 0\leq p_{f_i} \leq 1,\ \forall i\in \Phi_D,
\end{aligned}
\end{equation}
where $R_{D_i}^{(O)}$ and $R_{C}^{(O)}$ are obtained by Theorem \ref{theo:rate-overlay}.

The next result shows that the optimal frequency hopping probability can be obtained in closed form, and is \emph{independent of the choice of utility functions}.
\begin{prop}\label{prop:overlay-pf}
For any non-decreasing utility function, the optimal frequency hopping probability for rate is $p_{f_i}^* = \min\{1, b_{D_i}/(\theta B)\}$.
\end{prop}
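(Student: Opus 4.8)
The plan is to treat the objective of~(\ref{eq:opt-utility-overlay}) as a one‑dimensional function of $p_{f_i}$ (all other variables held at an arbitrary feasible value) and show it is non‑decreasing on $[0,\min\{1,b_{D_i}/(\theta B)\}]$ and non‑increasing on $[\min\{1,b_{D_i}/(\theta B)\},1]$, so that a maximizer must sit at the kink $p_{f_i}=\min\{1,b_{D_i}/(\theta B)\}$. The first step is to locate where $p_{f_i}$ actually enters. From~(\ref{eq:rho-overlay}), (\ref{eq:pa-overlay}) and Theorem~\ref{theo:rate-overlay}, the quantities $\rho^{(O)}$, $p_a^{(O)}$, $\mathbb{P}_C^{(O)}$, the cellular rate $R_C^{(O)}$ and the ``cellular‑mode'' (second) term of every $R_{D_k}^{(O)}$ are all independent of $p_{f_i}$; the only $p_{f_i}$‑dependent objects are the coefficient $\min\{p_{f_i}\theta B,b_{D_i}\}$ in the D2D‑mode term of $R_{D_i}^{(O)}$ and the aggregate interference density $\tilde\lambda_D=\sum_k p_{t_k}p_{f_k}\lambda_{D_k}$, which by Corollary~\ref{cor:sinr-d2donly-nonoise} enters the D2D coverage probability only through $\mathbb{P}_D^{(O)}(\beta)=\bigl(1+c_\alpha\tilde\lambda_D\beta^{2/\alpha}\bigr)^{-1}$ with $c_\alpha:=2\delta^2\,\frac{2\pi^2/\alpha}{\sin(2\pi/\alpha)}>0$, a strictly decreasing function of $\tilde\lambda_D$. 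Hence it suffices to study the monotonicity of $\sum_k\lambda_{D_k}U_D(R_{D_k}^{(O)})$ in $p_{f_i}$, since $U_C$ and the $\lambda_U U_C(R_C^{(O)})$ term are constant along this coordinate.

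On the interval $[\min\{1,b_{D_i}/(\theta B)\},1]$ (which is non‑empty only when $b_{D_i}<\theta B$) the coefficient $\min\{p_{f_i}\theta B,b_{D_i}\}$ is frozen at $b_{D_i}$, while $\tilde\lambda_D$ is strictly increasing in $p_{f_i}$, so $\mathbb{P}_D^{(O)}(\beta)$ strictly decreases for every $\beta>0$; therefore each $R_{D_k}^{(O)}$ is non‑increasing in $p_{f_i}$ (strictly so whenever its D2D‑mode coefficient is positive), and, $U_D$ being non‑decreasing, so is the objective. This already gives $p_{f_i}^{*}\le\min\{1,b_{D_i}/(\theta B)\}$. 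On the complementary interval $[0,\min\{1,b_{D_i}/(\theta B)\}]$ the coefficient equals $p_{f_i}\theta B$, so writing $\tilde\lambda_D=p_{t_i}\lambda_{D_i}p_{f_i}+\mu_i$ with $\mu_i:=\sum_{k\ne i}p_{t_k}p_{f_k}\lambda_{D_k}$ held fixed, a one‑line computation yields
\[
\frac{\partial}{\partial p_{f_i}}\Bigl(p_{f_i}\,\mathbb{P}_D^{(O)}(\beta)\Bigr)=\frac{1+c_\alpha\mu_i\beta^{2/\alpha}}{\bigl(1+c_\alpha\tilde\lambda_D\beta^{2/\alpha}\bigr)^{2}}>0 ,
\]
so, after multiplying by $\theta B\,p_{t_i}\,\frac{\log_2 e}{\beta+1}$ and integrating in $\beta$ (cf.\ (\ref{eq:rate-overlay-d2di})), the D2D‑mode term of $R_{D_i}^{(O)}$, and hence $R_{D_i}^{(O)}$ and $\lambda_{D_i}U_D(R_{D_i}^{(O)})$, is strictly increasing in $p_{f_i}$ on this interval.

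The step I expect to require the most care is reconciling these two facts over the whole feasible box when $M\ge 2$: raising $p_{f_i}$ below the kink strictly helps type~$i$, but through the shared term $\tilde\lambda_D$ it weakly degrades the D2D‑mode SINR, hence the rate, of every other type, so $\sum_k\lambda_{D_k}U_D(R_{D_k}^{(O)})$ is not monotone in $p_{f_i}$ termwise. I would handle this by exploiting the product structure noted after Proposition~\ref{theo:cdf-d2d}: in the under‑saturation regime the D2D‑mode coefficient of $R_{D_i}^{(O)}$ is $p_{t_i}p_{f_i}\theta B$, depending on $(p_{t_i},p_{f_i})$ only through the product $q_i:=p_{t_i}p_{f_i}$, and both $\tilde\lambda_D$ and all the rates $R_{D_k}^{(O)}$ depend on $p_f$ only through the vector $(q_k)$. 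Thus at any optimum one may first fix $p_t$ and the aggregate $\tilde\lambda_D$ (equivalently the budget $\sum_k q_k\lambda_{D_k}$); under this the objective is separately non‑decreasing in each coefficient $\min\{p_{f_k}\theta B,b_{D_k}\}$, and one can raise each $p_{f_k}$ to $\min\{1,b_{D_k}/(\theta B)\}$ while rescaling $p_{t_k}$ to keep $q_k$—hence $\tilde\lambda_D$ and every $R_{D_k}^{(O)}$—unchanged, which is feasible because $p_{f_k}\le 1$ and $\min\{p_{f_k}\theta B,b_{D_k}\}p_{t_k}=q_k\theta B$ is invariant under that rescaling. Applying this swap to every $k$ lands at $p_{f_k}^{*}=\min\{1,b_{D_k}/(\theta B)\}$ without decreasing the objective; the one detail that must still be verified is that the accompanying decrease of the $p_{t_k}$'s, which feeds back into $\rho^{(O)}$ and $p_a^{(O)}$, can be absorbed—for a single D2D type ($M=1$) this feedback is absent and the first two paragraphs already suffice, and for $M\ge2$ I would argue it away either by performing the rescaling at fixed $p_t$ via a compensating change of the $p_{f_k}$ of post‑saturated types, or by noting that at a joint optimum these loads already sit at their own stationary values.
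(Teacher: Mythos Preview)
Your overall structure---monotone decreasing above the kink, monotone increasing below it---is exactly the paper's argument; the paper's own proof is literally the two-sentence assertion that the objective is non-decreasing in $p_{f_i}$ on $[0,b_{D_i}/(\theta B)]$ and non-increasing on $[b_{D_i}/(\theta B),1]$, with no further detail. You go well beyond this by actually computing the derivatives and, crucially, by flagging the cross-type coupling through $\tilde\lambda_D$ when $M\ge 2$, which the paper simply ignores.

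That concern is genuine, and your proposed swap does not close it. The rescaling $p_{f_k}\mapsto \min\{1,b_{D_k}/(\theta B)\}$, $p_{t_k}\mapsto q_k/p_{f_k}$ does leave every D2D-mode term invariant (your computation there is correct), but the accompanying decrease of the $p_{t_k}$'s raises $\sum_k(1-p_{t_k})b_{D_k}\lambda_{D_k}$, hence raises $\rho^{(O)}$ and lowers both $p_a^{(O)}$ and $\mathbb{P}_C^{(O)}$; consequently $R_C^{(O)}$ and every cellular-mode contribution drop, and for general non-decreasing $U_D,U_C$ this loss is not controlled. In fact, for $M\ge 2$ and strictly concave $U_D$ (say a type with large $b_{D_i}$ sitting on the flat part of $U_D$ while another type sits on the steep part) one can make the left derivative of the objective at the kink negative, so the statement in its full generality is delicate at best. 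The clean fix is to specialize to the linear utility $U_D(x)=x$ that the paper actually uses downstream (Propositions~\ref{prop:drate-overlay}--\ref{prop:theta}): summing your own derivative identity over all types and using $\sum_{j\ne i}\lambda_{D_j}p_{t_j}\min\{p_{f_j}\theta B,b_{D_j}\}\le \theta B\,\mu_i$ yields
\[
\partial_{p_{f_i}}\sum_{k}\lambda_{D_k}R_{D_k}^{(O)}
\;\ge\;
p_{t_i}\lambda_{D_i}\,\theta B\int_0^\infty \frac{\log_2 e}{(\beta+1)\bigl(1+c_\alpha\tilde\lambda_D\beta^{2/\alpha}\bigr)^{2}}\,d\beta\;>\;0
\]
below the kink, so the coordinate-wise monotonicity the paper asserts \emph{does} hold in that case and no swap is needed. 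For $M=1$ your first two paragraphs already constitute a complete proof.
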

\begin{proof}
The objective function is a non-decreasing function of $p_{f_i}$ when $p_{f_i}\theta B\leq b_{D_i}$, and becomes monotonically decreasing when $p_{f_i}\theta B> b_{D_i}$. Therefore, the optimal frequency hopping probability is $p_{f_i}^* = \min\{1, b_{D_i}/(\theta B)\}$.
\end{proof}
The above proposition shows that the D2D network is resource limited. The larger the service demand is, the more aggressive the D2D link should be to access the frequency~bands.

Though maximization of total rate may not be a good performance metric in the sense that it has not considered fairness among UEs, it is a reasonable objective function for a first-cut investigation of the complicated hybrid network. Therefore, in the following, we focus on the linear utility function $U(x)=x$. A single tier cellular network is heavily loaded in most cases (e.g., in a typical LTE network with $B_C=10$MHz, $b_C\approx 1$MHz, and $\frac{\lambda_U}{\lambda_B}>10$, we have $B_C\lambda_B<b_C\lambda_U$). Therefore, we consider a congested network where $7B_C\lambda_B < 9 b_C\lambda_U$, and thus $\rho^{(O)}=1$ (i.e., BSs always send normal RBs) and $p_a^{(O)}=\frac{7B_C\lambda_B}{9\left(b_C\lambda_U+\sum_{i=1}^{M}b_{D_i}(1-p_{t_i})\lambda_{D_i}\right)}$.

\begin{defi}
The rate density is defined as the expected total rate of D2D links and cellular users per surface unit.
\end{defi}
For tractability, we investigate the hopping scheme to maximize the rate lower bounds given by Corollary~\ref{cor:rateLB}.  We compare the results of exact rates and their lower bounds by simulations in Section~\ref{sec:simulation}. Note that the following results can be easily extended to the cases where the Modulation and Coding Scheme (MCS) is not adaptive by setting a fixed $\beta$. Using the rate lower bounds, the rate density of dedicated and shared networks can be calculated by

\begin{equation}\label{eq:drate-overlay}
\begin{aligned}
d_{\rt}^{(O)} \overset{\triangle}{=}&\sum_{j=1}^{M}\lambda_{D_j} R_{Dl_j}^{(O)}+\lambda_U R_{Cl}^{(O)},\\
\end{aligned}
\end{equation}
and
\begin{equation}\label{eq:drate-underlay}
\begin{aligned}
d_{\rt}^{(S)}&=\sum_{j=1}^{M}\lambda_{D_j} R_{Dl_j}^{(S)}+\lambda_U R_{Cl}^{(S)},
 \end{aligned}
\end{equation}
where $R_{Dl_j}^{(O)}$, $R_{Cl}^{(O)}$, $R_{Dl_j}^{(S)}$ and $R_{Cl}^{(S)}$ are given by Corollary \ref{cor:rateLB}.


\begin{prop}\label{prop:drate-overlay}
To maximize the rate density in the dedicated network with $w\geq 1$, we have $p_{t_i}^* = 1$. On the other hand, when $w\rightarrow 0$, we have $p_{t_i}^*\rightarrow 0$. \vspace{-0.2cm}
\end{prop}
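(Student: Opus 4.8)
The plan is to work entirely in the interference-limited regime, inserting the simplified CCDFs of Corollary~\ref{cor:sinr-d2donly-nonoise} and the congested-regime values $\rho^{(O)}=1$ and $p_a^{(O)}=\tfrac{7(1-\theta)B\lambda_B}{9\left(b_C\lambda_U+\sum_j b_{D_j}(1-p_{t_j})\lambda_{D_j}\right)}$ into the rate density~(\ref{eq:drate-overlay}) built from the (dedicated-network analogue of the) lower bounds in Corollary~\ref{cor:rateLB}, and then splitting $d_{\rt}^{(O)}$ into a ``cellular plus cellular-mode'' piece $\Phi_1(p_t)$ and a ``D2D-mode'' piece $\Phi_2(p_t,p_f)$, which I would optimize separately over $p_t$.

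First I would note that in this regime $\mathbb{P}_C^{(O)}(\beta)=\tfrac{1}{2H_1(\beta,\alpha)+1}$ is free of every hopping probability, so $R_{Cl}^{(O)}=b_C p_a^{(O)}C_1$ with $C_1:=\sup_\beta \log_2(1+\beta)\mathbb{P}_C^{(O)}(\beta)$ a finite constant. Writing $n:=b_C\lambda_U+\sum_j b_{D_j}(1-p_{t_j})\lambda_{D_j}$ and $A:=\tfrac{7(1-\theta)B\lambda_B}{9}$ (so $p_a^{(O)}=A/n$), the cellular users together with the cellular-mode term of the D2D links contribute
\begin{equation*}
\Phi_1(p_t)=R_{Cl}^{(O)}\Big(\lambda_U+\tfrac{1}{b_C w}\sum_j \lambda_{D_j}b_{D_j}(1-p_{t_j})\Big)=\frac{C_1 A b_C\lambda_U}{n}\Big(1-\frac{1}{w}\Big)+\frac{C_1 A}{w},
\end{equation*}
where the second equality uses the load identity $\sum_j \lambda_{D_j}b_{D_j}(1-p_{t_j})=n-b_C\lambda_U$. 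Hence $\partial\Phi_1/\partial p_{t_i}=\tfrac{C_1 A b_C\lambda_U\,b_{D_i}\lambda_{D_i}}{n^2}\big(1-\tfrac1w\big)$, which is $\geq0$ exactly when $w\geq1$; and as $w\to0$, $\Phi_1=\tfrac{C_1 A}{w}\big(1-\tfrac{b_C\lambda_U}{n}\big)+\tfrac{C_1 A b_C\lambda_U}{n}$, whose leading factor $1-b_C\lambda_U/n$ is largest precisely when $n$ is maximal, i.e.\ at $p_t=\mathbf 0$.

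The delicate piece is $\Phi_2(p_t,p_f)=D(\tilde{\lambda}_D)\sum_j \lambda_{D_j}p_{t_j}\min\{p_{f_j}\theta B,b_{D_j}\}$, where $D(\tilde{\lambda}_D):=\sup_\beta \log_2(\beta+1)\mathbb{P}_D^{(O)}(\beta)$ and the hopping enters only through $\tilde{\lambda}_D=\sum_j p_{t_j}p_{f_j}\lambda_{D_j}$. Raising $p_{t_i}$ increases the served-resource sum but also the D2D interference, and $D(\cdot)$ is itself non-monotone (it blows up as $\tilde{\lambda}_D\to0$ and decays as $\tilde{\lambda}_D\to\infty$), so $\Phi_2$ is not monotone in $p_{t_i}$ and a termwise derivative argument fails. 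Instead I would argue by a change of variables: given any feasible $(p_t,p_f)$, replace $p_t$ by the all-ones vector $\mathbf 1$ and each $p_{f_j}$ by $\bar p_{f_j}:=p_{t_j}p_{f_j}\in[0,1]$. This keeps $\tilde{\lambda}_D$ — hence $D(\tilde{\lambda}_D)$ and the entire D2D SINR law — unchanged, while $\min\{\bar p_{f_j}\theta B,b_{D_j}\}\geq p_{t_j}\min\{p_{f_j}\theta B,b_{D_j}\}$ because $p_{t_j}\leq1$; thus $\Phi_2(\mathbf 1,\bar p_f)\geq\Phi_2(p_t,p_f)$. Since also $\Phi_1(\mathbf 1)\geq\Phi_1(p_t)$ when $w\geq1$, we obtain $d_{\rt}^{(O)}(\mathbf 1,\bar p_f)\geq d_{\rt}^{(O)}(p_t,p_f)$, so a maximizer exists with $p_{t_i}^*=1$ for all $i$. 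For the second claim, $\Phi_2$ and $C_1 A b_C\lambda_U/n$ are bounded over the feasible box uniformly in $w$ near $0$, so the maximizer of $d_{\rt}^{(O)}=\Phi_1+\Phi_2$ converges, as $w\to0$, to the unique (in $p_t$) maximizer of the blow-up term $\tfrac{C_1 A}{w}\big(1-\tfrac{b_C\lambda_U}{n}\big)$, namely $p_t=\mathbf 0$; hence $p_{t_i}^*\to0$.

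I expect the main obstacle to be exactly $\Phi_2$: because the D2D-mode rate is not monotone in $p_{t_i}$, the conclusion cannot be read off term by term, and everything hinges on the observation that for a fixed aggregate active-D2D density $\tilde{\lambda}_D$ it is optimal to push \emph{all} potential links into D2D mode ($p_{t_j}=1$) and correspondingly shrink their frequency access, which simultaneously maximizes the mode-selection gain and, as a byproduct, recovers $p_{f_j}^*=\min\{1,b_{D_j}/(\theta B)\}$ of Proposition~\ref{prop:overlay-pf} (any larger $\bar p_{f_j}$ only inflates $\tilde{\lambda}_D$). The remaining ingredients — the collapse of $\Phi_1$ through the load identity and the $w\to0$ limit — are routine.
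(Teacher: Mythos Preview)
Your proof is correct, and your handling of the cellular-plus-cellular-mode piece $\Phi_1$ coincides with the paper's: both collapse it via the load identity and check that the derivative in $p_{t_k}$ has the sign of $w-1$. The approaches diverge on the D2D-mode piece $\Phi_2$. The paper first invokes Proposition~\ref{prop:overlay-pf} to fix $p_{f_j}^*=\min\{1,b_{D_j}/(\theta B)\}$; with that substitution one has $\min\{p_{f_j}^*\theta B,b_{D_j}\}=p_{f_j}^*\theta B$ identically, so the entire D2D-mode contribution collapses to $\theta B\,\tilde\lambda_D\,D(\tilde\lambda_D)$, and since for each fixed $\beta$ the map $\tilde\lambda_D\mapsto \tilde\lambda_D/(1+c\tilde\lambda_D\beta^{2/\alpha})$ is increasing, monotonicity in every $p_{t_i}$ is immediate. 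Your coupling $(p_t,p_f)\mapsto(\mathbf 1,\bar p_f)$ with $\bar p_{f_j}=p_{t_j}p_{f_j}$ is a genuinely different device: it bypasses the sequential optimization (first $p_f$, then $p_t$) and establishes the claim over the full feasible box in one stroke, at the cost of a slightly subtler inequality. One minor slip of language: $D(\cdot)$ is actually monotone \emph{decreasing} in $\tilde\lambda_D$ (your own parenthetical says as much); the non-monotonicity you have in mind is that of $\Phi_2$ in $p_{t_i}$ for generic fixed $p_f$, which is precisely what motivates your coupling. The $w\to0$ argument is essentially the same in both proofs, yours being the more explicit of the two.
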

\begin{proof}
See Appendix \ref{pf-prop:drate-overlay}.
\end{proof}
Propositions \ref{prop:overlay-pf} and \ref{prop:drate-overlay} imply that both D2D and cellular networks are resource limited when the network is fully loaded. In order to utilize resources efficiently, all potential D2D links would be in D2D mode when $w\geq 1$. On the other hand, by setting $w$ small enough, the potential D2D links can be pushed to cellular mode (i.e., $p_{t_i}^*\rightarrow 0$). 

Note that analytically it is true that all potential D2D links are in D2D mode to maximize the total average rate. However, traffic channels in real cellular systems are typically not designed to operate at very low SINR (e.g., $\sinr < -6$dB) \cite{3GPPratecap}. If the average distance is very large such that the SINRs of many D2D links are smaller than $-6$dB, the optimal mode selection would be different. 
Also, maximization of different utility functions would lead to different optimal mode selections. For example, when we consider the max-min utility, the optimal time hopping would depend on the average distance between a D2D transmitter and its receiver (which is characterized by $\delta$).  As $\delta$ increases, the rates of potential D2D links operating in D2D mode decrease, and may be smaller than the rate obtained in cellular mode. Thus with an increasing probability,  the potential D2D links in D2D mode would have the minimal rates in the system. Therefore, we would push some potential D2D links to cellular mode in this example, in order to increase their rates and maximize the minimal rate (i.e., optimal time hopping probability $p_t^*<1$).

Given the optimal time hopping and frequency hopping, we investigate the optimal resource partition between D2D and cellular networks (i.e., $\theta$). Plugging $p_{f_i}^* = \min\{1, \frac{b_{D_i}}{\theta B} \}$ and $p_{t_i}^*=1$ to (\ref{eq:drate-overlay}), the objective function is a non-differentiable function of $\theta$.
We denote $\tilde{b}_i = b_{D_i}/B$ for $i=1,\cdots,M$ and $\tilde{b}_0=0$ for simplicity. Without loss of generality, we assume the sequence $\{\tilde{b}_i\}_{i=0}^{M}$ is in ascending order (i.e., $\tilde{b}_0$ is the smallest and $\tilde{b}_{M}$ is the largest). Let $\tilde{b}_L$ be the largest $\tilde{b}_i$ that is smaller than $1$. We partition the domain of $\theta$  into $\left[\tilde{b}_i, \min\left\{\tilde{b}_{i+1},1\right\}\right], i=0, \cdots, L$. On the $i$th region $\left[\tilde{b}_i, \min\left\{\tilde{b}_{i+1},1\right\}\right]$, the types of D2D links can be separated into two sets, where $\mathcal{S}_i=\{0,\cdots, i\}$ and $\mathcal{G}_i = \{i+1,\cdots, M\}$. We have $p_{f_j}^*=\frac{\tilde{b}_j}{\theta}$ for $j\in\mathcal{S}_i$, and $p_{f_j}^*=1$ for $j\in\mathcal{G}_i$. Thus, the objective function becomes a differentiable function on each partition.   Denote $A_i=B\log_2(\beta_D+1)\sum_{j\in\mathcal{S}_i} \lambda_{D_j} \tilde{b}_j $, $C_i=2\delta^2 \frac{2\pi^2 /\alpha}{\sin\left(2\pi/\alpha\right)}\beta^{\frac{2}{\alpha}}\sum_{j\in\mathcal{S}_i} \lambda_{D_j}\tilde{b}_j $, $D= \frac{7B \lambda_B}{9}\frac{\log_2(\beta_C+1)}{2  H_1(\beta_C, \alpha) + 1}$, $E_i=B\log_2(\beta_D+1) \sum_{j\in\mathcal{G}_i} \lambda_{D_j} $, and $F_i=2\delta^2 \sum_{j\in\mathcal{G}_i}  \lambda_{D_j} \frac{2\pi^2 /\alpha}{\sin\left(2\pi/\alpha\right)}\beta^{\frac{2}{\alpha}} + 1$.
Letting
\begin{equation}\label{eq:tildeb}
\tilde{b}'_i=\left\{
\begin{aligned}
& 1, \text{ if } E_i>DF_i,\\
&\sqrt{\frac{C_i(A_iF_i-E_iC_i)}{F_i^2(D F_i-E_i})}-\frac{C_i}{F_i}, \text{ otherwise},
\end{aligned}\right.
\end{equation}
we can express the optimal solution $\theta^*$ in terms of $\tilde{b}'_i$, which is given in Proposition \ref{prop:theta}.

\begin{prop}\label{prop:theta}
The optimal $\theta$ to maximize (\ref{eq:drate-overlay}) belongs to the following set:
\begin{equation}
\mathcal{O}= \left\{  \left[\tilde{b}'_{i}\right]_{\tilde{b}_i}^{\min\{1, \tilde{b}_{i+1}\}}: i=0,\cdots,L\right\},
\end{equation}
where $\tilde{b}'_{i}$ is defined as (\ref{eq:tildeb}) and $[x]_a^b$ denotes $\min\{\max\{x,a\},b\}$.
In other words, $\theta^*=\argmax_{\theta\in\mathcal{O}} d_{\rt}^{(O)}$.
\end{prop}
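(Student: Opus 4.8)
The plan is to maximize $d_{\rt}^{(O)}$ separately on each of the $L+1$ subintervals $[\tilde{b}_i,\min\{1,\tilde{b}_{i+1}\}]$ set up just before the statement, and then patch the pieces together by continuity; throughout, the SINR thresholds $\beta_D,\beta_C$ appearing in the constants $A_i,C_i,D,E_i,F_i$ are treated as fixed. \emph{Step 1 (reduce each region to one rational function).} I would substitute $p_{t_i}^*=1$ (Proposition~\ref{prop:drate-overlay}) and, on the $i$th region, $p_{f_j}^*=\tilde{b}_j/\theta$ for $j\in\mathcal{S}_i$ and $p_{f_j}^*=1$ for $j\in\mathcal{G}_i$ into the rate lower bounds of Corollary~\ref{cor:rateLB}, using the noiseless coverage probabilities of Corollary~\ref{cor:sinr-d2donly-nonoise} together with $\rho^{(O)}=1$ and $p_a^{(O)}=\frac{7(1-\theta)B\lambda_B}{9 b_C\lambda_U}$ in the congested regime. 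There one has $\tilde{\lambda}_D=\frac{1}{\theta B}\sum_{j\in\mathcal{S}_i}b_{D_j}\lambda_{D_j}+\sum_{j\in\mathcal{G}_i}\lambda_{D_j}$, so $1+2\delta^2\tilde{\lambda}_D\frac{2\pi^2/\alpha}{\sin(2\pi/\alpha)}\beta_D^{2/\alpha}=(C_i+\theta F_i)/\theta$ and hence $\mathbb{P}_D^{(O)}(\beta_D)=\theta/(C_i+\theta F_i)$; moreover $\min\{p_{f_j}^*\theta B,b_{D_j}\}$ equals $b_{D_j}$ on $\mathcal{S}_i$ and $\theta B$ on $\mathcal{G}_i$, while the cellular term collapses to $\lambda_U R_{Cl}^{(O)}=(1-\theta)D$. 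Collecting terms gives, on the $i$th region,
\begin{equation*}
d_{\rt}^{(O)}(\theta)=\frac{A_i\theta+E_i\theta^2}{C_i+\theta F_i}+(1-\theta)D ,
\end{equation*}
a differentiable function of $\theta$.

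\emph{Step 2 (maximize one piece).} Differentiating and clearing the positive factor $(C_i+\theta F_i)^2$, the sign of $\frac{d}{d\theta}d_{\rt}^{(O)}$ is that of the quadratic $g_i(\theta)=-F_i(DF_i-E_i)\theta^2-2C_i(DF_i-E_i)\theta+C_i(A_i-DC_i)$. A key elementary fact I would record is $A_i/C_i>E_i/F_i$, i.e. $A_iF_i>E_iC_i$, which holds because $E_i/F_i<B\log_2(\beta_D+1)/(2\delta^2\frac{2\pi^2/\alpha}{\sin(2\pi/\alpha)}\beta_D^{2/\alpha})=A_i/C_i$, the strictness coming from the $+1$ in $F_i$. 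If $E_i>DF_i$, then $A_i/C_i>E_i/F_i>D$ forces $A_i>DC_i$, so $g_i$ is an upward parabola with nonnegative coefficients, hence $g_i(\theta)>0$ for $\theta>0$; thus $d_{\rt}^{(O)}$ is increasing and its region-maximizer is $\min\{1,\tilde{b}_{i+1}\}$, matching $\tilde{b}'_i=1$. If $E_i<DF_i$, then $g_i$ is a downward parabola with vertex at $-C_i/F_i<0$, so $g_i$ is strictly decreasing on $\theta\ge0$; solving $g_i(\theta)=0$ by the quadratic formula, and simplifying its discriminant with the identity $C_i(DF_i-E_i)+F_i(A_i-DC_i)=A_iF_i-E_iC_i>0$ into $4C_i(DF_i-E_i)(A_iF_i-E_iC_i)$, the unique positive root is exactly $\sqrt{C_i(A_iF_i-E_iC_i)/(F_i^2(DF_i-E_i))}-C_i/F_i=\tilde{b}'_i$. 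In either case $d_{\rt}^{(O)}$ is quasi-concave on $[\tilde{b}_i,\min\{1,\tilde{b}_{i+1}\}]$ with unconstrained stationary point $\tilde{b}'_i$, so its maximum over the region is attained at the clamped value $[\tilde{b}'_i]_{\tilde{b}_i}^{\min\{1,\tilde{b}_{i+1}\}}$ — in particular, if $A_i<DC_i$ one checks $\tilde{b}'_i<0$, so the maximizer is the left endpoint $\tilde{b}_i$, consistent with the clamping.

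To conclude, I would patch the pieces: the intervals $[\tilde{b}_i,\min\{1,\tilde{b}_{i+1}\}]$, $i=0,\cdots,L$, cover $[0,1]$, and $d_{\rt}^{(O)}$ is continuous on $[0,1]$ since each $p_{f_j}^*(\theta)$ is continuous in $\theta$ across every breakpoint $\tilde{b}_{i+1}$. Hence the maximizer of $d_{\rt}^{(O)}$ over $[0,1]$ equals one of the $L+1$ per-region maximizers, i.e. $\theta^*\in\mathcal{O}$ and $\theta^*=\argmax_{\theta\in\mathcal{O}}d_{\rt}^{(O)}$.

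The step I expect to be the main obstacle is the algebra of Step 2: verifying that the closed form~(\ref{eq:tildeb}) is precisely the positive root of $g_i$ — the discriminant simplification and the strict inequality $A_iF_i-E_iC_i>0$ that makes both the discriminant and the radicand legitimate — and separately handling the degenerate cases $\mathcal{S}_i=\{0\}$ (where $A_i=C_i=0$ and $g_i$ keeps only its quadratic term), $\mathcal{G}_i=\emptyset$ (where $E_i=0$, $F_i=1$), the borderline $E_i=DF_i$, and the subcase $A_i<DC_i$, checking that each remains consistent with the stated formula and with the clamping convention $[x]_a^b$.
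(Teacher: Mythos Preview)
Your proposal is correct and follows essentially the same route as the paper: partition $[0,1]$ at the breakpoints $\tilde{b}_i$, reduce the objective on each piece to $\frac{A_i\theta+E_i\theta^2}{C_i+F_i\theta}-D\theta$ (up to an additive constant), use $A_iF_i>E_iC_i$, split on the sign of $E_i-DF_i$, and identify the critical point~(\ref{eq:tildeb}). The only cosmetic difference is that the paper establishes the maximum in the case $E_i<DF_i$ via the sign of the second derivative (concavity) whereas you read it off from the sign pattern of $g_i$; your handling of the continuity/patching step and the degenerate subcases is in fact more explicit than the paper's.
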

\begin{proof}
See Appendix \ref{pf-prop:theta}.
\end{proof}
The parameters $A_i$, $C_i$, $E_i$ and $F_i$ can be calculated  through partial sum, which leads to a computational complexity of $O(M)$ to get the set $\mathcal{O}$. Recalling that $L$ is the number of D2D types with $\tilde{b}_i<1$, the cardinality of set $\mathcal{O}$ is at most $L+1$, where $L\leq M$. Note that $M$ is generally  a small number, which implies that~$L+1$ is small, and thus the result in Proposition~\ref{prop:theta} significantly reduces the complexity compared to the brute force search. Note that $E_i-F_iD$ decreases as $\theta$ increases. We have shown in  Appendix \ref{pf-prop:theta} that the objective function is non-decreasing when $E_i-F_iD\geq 0$. Therefore, we only need to search over the domains where $E_i<F_iD$, and thus the  cardinality of the set $\mathcal{O}$ can be further reduced.

\vspace{-0.4cm}
\subsection{Optimization of the Shared Network}

In this section, we turn our attention to the optimization of the performance in the shared network. Similarly to the dedicated network, the objective is to maximize the utility function in terms of the rate lower bounds given by Corollary~\ref{cor:rateLB}. We again consider a heavily loaded network with $\rho^{(S)}=1$ and $p_a^{(S)}=\frac{7B\lambda_B}{9\left(b_C\lambda_U+\sum_{i=1}^{M}b_{D_i}(1-p_{t_i})\lambda_{D_i}\right)}$. 
Under these assumptions, we have the following conclusion.

\begin{prop}\label{prop:drate-underlay}
Given $w\geq 1$, the optimal time hopping to maximize the rate density  (\ref{eq:drate-underlay}) is $p_{t_i}^*=1,\ \forall i\in\{1,\cdots, M\}$, i.e., all potential D2D links are in D2D mode. In contrast, when $w\rightarrow 0$, we have $p_{t_i}^*\rightarrow 0$.

\end{prop}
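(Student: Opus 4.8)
The proof parallels that of Proposition~\ref{prop:drate-overlay}. The plan is to first rewrite the rate density so that the dependence on the time hopping probabilities becomes explicit, and then argue by a one--coordinate ``swap''. Substituting the heavily--loaded values $\rho^{(S)}=1$ and $p_a^{(S)}=\frac{7B\lambda_B}{9\left(b_C\lambda_U+\sum_j b_{D_j}(1-p_{t_j})\lambda_{D_j}\right)}$ into the lower bounds of Corollary~\ref{cor:rateLB}, and using that $b_Cp_a^{(S)}$ is independent of $\beta$, one may write $R_{Cl}^{(S)}=b_Cp_a^{(S)}K_C$ and the first term of $R_{Dl_i}^{(S)}$ as $p_{t_i}\min\{p_{f_i}B,b_{D_i}\}K_D$, where $K_C=\sup_\beta\log_2(1+\beta)\mathbb{P}_C^{(S)}(\beta)$ and $K_D=\sup_\beta\log_2(1+\beta)\mathbb{P}_D^{(S)}(\beta)$. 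Collecting the terms proportional to $R_{Cl}^{(S)}$ in $d_{\rt}^{(S)}=\sum_j\lambda_{D_j}R_{Dl_j}^{(S)}+\lambda_UR_{Cl}^{(S)}$ and using $p_{t_j}\min\{p_{f_j}B,b_{D_j}\}=\min\{p_{t_j}p_{f_j}B,\,p_{t_j}b_{D_j}\}$, one obtains
\[
d_{\rt}^{(S)}=K_D\sum_{j=1}^{M}\lambda_{D_j}\min\{p_{t_j}p_{f_j}B,\,p_{t_j}b_{D_j}\}+\frac{7B\lambda_Bb_CK_C}{9}\cdot\frac{\lambda_U+S/(b_Cw)}{b_C\lambda_U+S},
\]
where $S=\sum_j\lambda_{D_j}b_{D_j}(1-p_{t_j})\geq0$. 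The crucial structural fact, read off from Corollary~\ref{cor:cdf-cd-spec}, is that $\mathbb{P}_D^{(S)}$, $\mathbb{P}_C^{(S)}$ and hence $K_D$, $K_C$ depend on $(p_t,p_f)$ only through the aggregate D2D interferer density $\tilde{\lambda}_D=\sum_j p_{t_j}p_{f_j}\lambda_{D_j}$, and are non--increasing in $\tilde{\lambda}_D$.

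For the case $w\geq1$, I would show that replacing $(p_{t_i},p_{f_i})$ by $(1,\,p_{t_i}p_{f_i})$ never decreases $d_{\rt}^{(S)}$. Indeed, this substitution leaves $\tilde{\lambda}_D$ unchanged, hence freezes $K_C$ and $K_D$; it replaces the $i$th summand $\min\{p_{t_i}p_{f_i}B,\,p_{t_i}b_{D_i}\}$ by $\min\{p_{t_i}p_{f_i}B,\,b_{D_i}\}$, which is no smaller since $p_{t_i}\leq1$; and it can only decrease $S$, because the contribution of type $i$ drops from $\lambda_{D_i}b_{D_i}(1-p_{t_i})$ to $0$. Writing $f(S)=\frac{\lambda_U+S/(b_Cw)}{b_C\lambda_U+S}$, a short computation gives $f'(S)=\frac{\lambda_U\left(1/w-1\right)}{(b_C\lambda_U+S)^2}\leq0$ when $w\geq1$, so a decrease of $S$ does not decrease $f(S)$. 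All three ingredients therefore move in the favorable direction, and iterating the swap over $i=1,\dots,M$ shows that taking $p_{t_i}^*=1$ for all $i$ is optimal (for $w>1$ the gain is strict unless $p_{t_i}=1$ already, so the maximizer is then unique). The congested--network assumption $7B\lambda_B<9b_C\lambda_U$ ensures $p_a^{(S)}<1$ for every admissible $S$, so the cap $\min\{\cdot,1\}$ is never active and the displayed formula for $p_a^{(S)}$ stays valid after each swap.

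For the case $w\to0$, the second term of $d_{\rt}^{(S)}$ behaves like $\frac{7B\lambda_B K_C}{9w}\cdot\frac{S}{b_C\lambda_U+S}$, which diverges as $w\to0$ whenever $S>0$, whereas the first term is bounded by a constant independent of $w$ and $(p_t,p_f)$. Since $\frac{S}{b_C\lambda_U+S}$ is strictly increasing in $S$ and $K_C$ is non--increasing in $\tilde{\lambda}_D$, the leading $\Theta(1/w)$ term is maximized by making $S$ as large and $\tilde{\lambda}_D$ as small as possible simultaneously, and both occur exactly at $p_{t_j}=0$ for all $j$. Hence, for $w$ small enough, every maximizer of $d_{\rt}^{(S)}$ must lie arbitrarily close to the point $p_t=0$, i.e.\ $p_{t_i}^*\to0$ as $w\to0$.

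The main obstacle is the bookkeeping in the first step: one must gather every $R_{Cl}^{(S)}$--weighted term --- including the cellular--mode contributions $\frac{b_{D_j}}{b_Cw}(1-p_{t_j})R_{Cl}^{(S)}$ of the D2D links --- into the single rational function $f(S)$, and verify carefully that the suprema defining $K_C$ and $K_D$ depend on the hopping probabilities only through $\tilde{\lambda}_D$, so that they are genuinely frozen under the swap. Once the formula above is in place, the sign of $f'$, the inequality $p_{t_i}\leq1$, and the $\Theta(1/w)$--versus--$O(1)$ comparison are all routine. A minor point to dispatch is the degenerate case $p_{f_i}=0$, which the same swap handles, since it merely zeroes the $i$th D2D--mode summand while still not increasing $S$.
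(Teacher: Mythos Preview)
Your argument is correct and follows essentially the same route as the paper. Both proofs exploit the same structural fact---that $\mathbb{P}_D^{(S)}$ and $\mathbb{P}_C^{(S)}$ depend on $(p_t,p_f)$ only through $\tilde{\lambda}_D=\sum_j p_{t_j}p_{f_j}\lambda_{D_j}$---to freeze the SINR terms while varying $p_{t_i}$, and then observe that the remaining cellular-mode term $f(S)$ is non-increasing in $S$ when $w\geq1$.

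The only difference is in presentation: the paper first restricts to $p_{f_i}\leq b_{D_i}/B$, changes variables to $x_i=p_{t_i}p_{f_i}$, and then takes the partial derivative $\partial d_{\rt}^{(S)}/\partial p_{t_i}$ at fixed $x_i$; you instead perform the discrete swap $(p_{t_i},p_{f_i})\mapsto(1,p_{t_i}p_{f_i})$ directly. Your swap has the minor advantage of handling the $\min\{p_{f_i}B,b_{D_i}\}$ uniformly without a preliminary restriction on $p_{f_i}$, and your treatment of the $w\to0$ limit is somewhat more explicit than the paper's one-line reference back to the dedicated case. But these are cosmetic; the decomposition into a $\tilde{\lambda}_D$-dependent D2D term plus the rational function of $S$, and the sign computation $f'(S)=\lambda_U(1/w-1)/(b_C\lambda_U+S)^2$, are identical in substance to the paper's proof.
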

\begin{proof}
See Appendix \ref{pf-prop:drate-underlay}.
\end{proof}
 
Similarly to the dedicated network, $w$ can be adopted as a parameter to balance load between cellular and D2D networks, by decreasing which we can push D2D links to cellular mode. 
Though it is difficult to obtain the optimal frequency hopping in closed form in a general shared network, the maximization has been reduced to a lower-dimensional problem by finding the optimal time hopping probability, and the complexity to search the optimal scheme becomes much less. Denoting  the number of possible values of $p_{t_i}$ and $p_{f_i}$ by $|p_t|$ and $|p_f|$, respectively, the complexity of brute force can be reduced from~$\mathcal{O}((|p_t|\times|p_f|)^{M})$ to $\mathcal{O}(|p_f|^{M})$ (e.g., for the case with $|p_t|=|p_f|=100$ and $M=2$, the complexity is reduced from $\mathcal{O}(10^8)$ to $\mathcal{O}(10^4)$).

\section{Performance Evaluation}\label{sec:simulation}
In this section, we provide simulation results to validate the proposed model and analytical results.  The main simulation parameters used in this paper are summarized in Table \ref{tb:simu}, unless otherwise specified. The total bandwidth,  noise power, path loss exponent, transmit power, and density of BSs are chosen based on 3GPP documents (see, e.g., \cite{3gppmoto,3gppbackhaul}). As for the other parameters, since the D2D traffic demand and its growth is not clear at this stage, the values are chosen given the best information available to us.

\begin{table}
\small
\caption{Simulation parameters}\label{tb:simu}
\begin{center}
\begin{tabular}{|c||c|}
\hline
Total bandwidth & $10$MHz\\
\hline
Number of sub-bands $B$ & $50$\\
\hline
Number  of D2D types $M$ & 2\\
\hline
\tabincell{c}{Service demand of type $i$\\D2D links $b_{D_i}$}  & 5, 15 subbands\\
\hline
Service demand of cellular users $b_C$ & 5 subbands\\
\hline
Density of BSs $\lambda_B$ & $1/500^2$ m$^{-2}$\\
\hline
Density of cellular users $\lambda_U$ & $60/500^2$ m$^{-2}$\\
\hline
\tabincell{c}{Density of type $i$ D2D links $\lambda_{D_i}$\\(same density for different types)} & $15/500^2$ m$^{-2}$\\
\hline
\tabincell{c}{Average distance between a D2D \\ transmitter and receiver $\delta\sqrt{\frac{\pi}{2}}$ } & $50$ m \\
\hline
Transmit power of BSs $P_B$ & $46$ dBm\\
\hline
Transmit power of D2D transmitters $P_D$ & $20$ dBm\\
\hline
Noise  power $\sigma^2$ & $-104$ dBm\\
\hline
Path loss exponent $\alpha$ & $3.5$\\
\hline
\end{tabular}
\end{center}
\end{table}

\vspace{-0.5cm}
\subsection{Validation of the System Model}
We validate our analysis in Figs. \ref{fig:overlaysinr} and \ref{fig:underlaysinr}. In Fig. \ref{fig:overlaysinr}, we compare the analytical SINR cumulative distribution functions (CDFs) of D2D and cellular links in dedicated networks (given in Props. \ref{theo:cdf-d2d} and \ref{prop:sinr-cellonly}) to their corresponding simulation results.  The SINR CDFs of D2D and cellular links in shared networks (given in Theorems \ref{theo:cdf-cd-d2d} and \ref{theo:cdf-cd-cell}) are shown in Fig. \ref{fig:underlaysinr}. Recall that we approximate the set of interfering BSs by a PPP with density $\rho \lambda_B$. This approximation leads to gaps between the analysis and simulation results (e.g., the gap between analysis and simulation results of the rate of cellular links in dedicated networks). However, the gaps are very small, which implies that the approximation is reasonable. From the fact that  analytical results and their corresponding simulation results are in quite good agreement,  we conclude that stochastic geometry allows us to efficiently find the approximate coverage probabilities for the D2D-enabled cellular network.

\begin{figure}
\centering
\includegraphics[width=8cm, height=6cm]{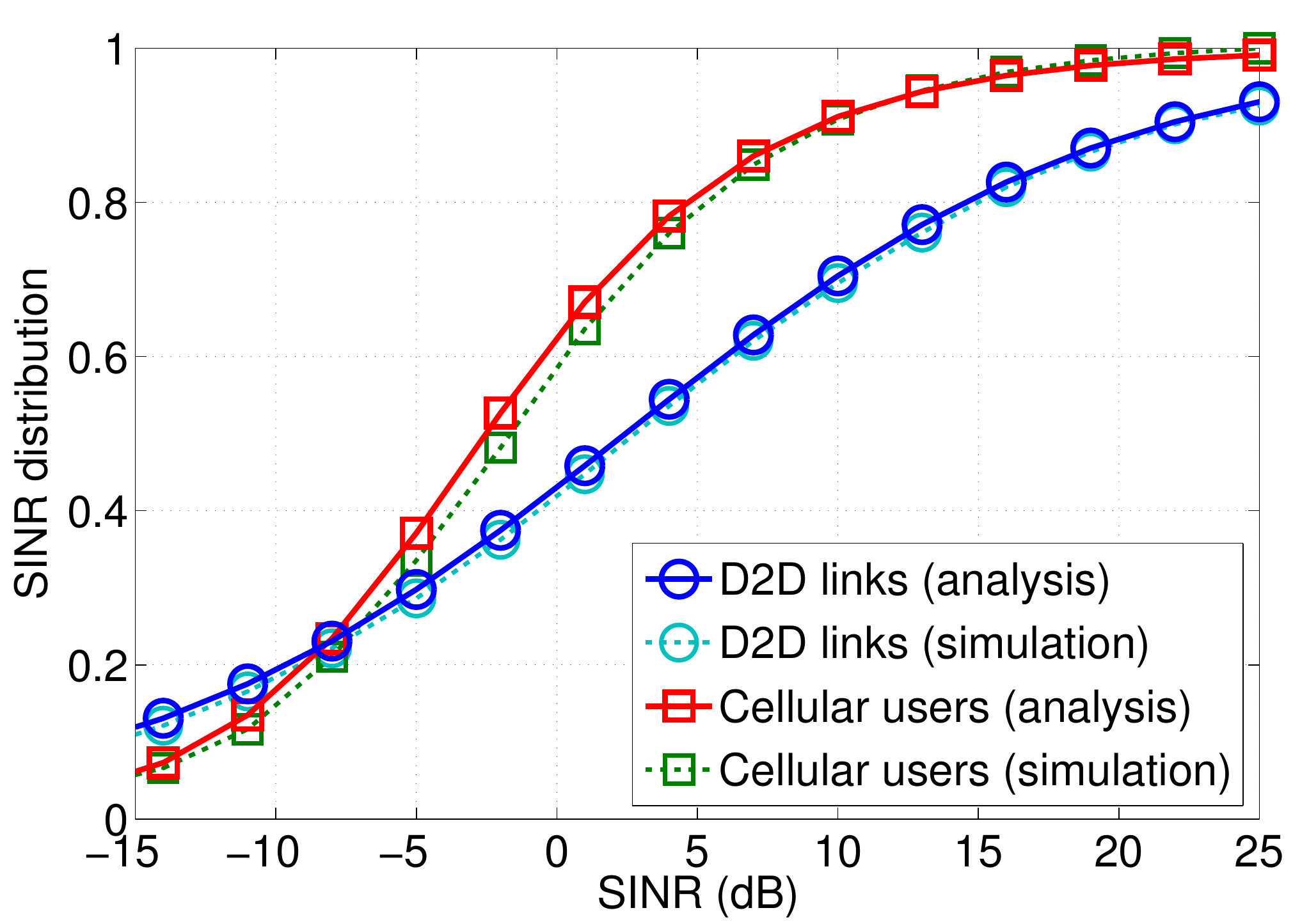}
\caption{The SINR CDFs of active D2D links and cellular users in the dedicated network, with hopping probabilities $p_{t_1}=p_{t_2}=1$, $p_{f_1}=0.2$ and $p_{f_2}=0.6$.  The theoretical and simulation results are in quite good agreement.}
\label{fig:overlaysinr}
\end{figure}

\begin{figure}
\centering
\includegraphics[width=8cm, height=6cm]{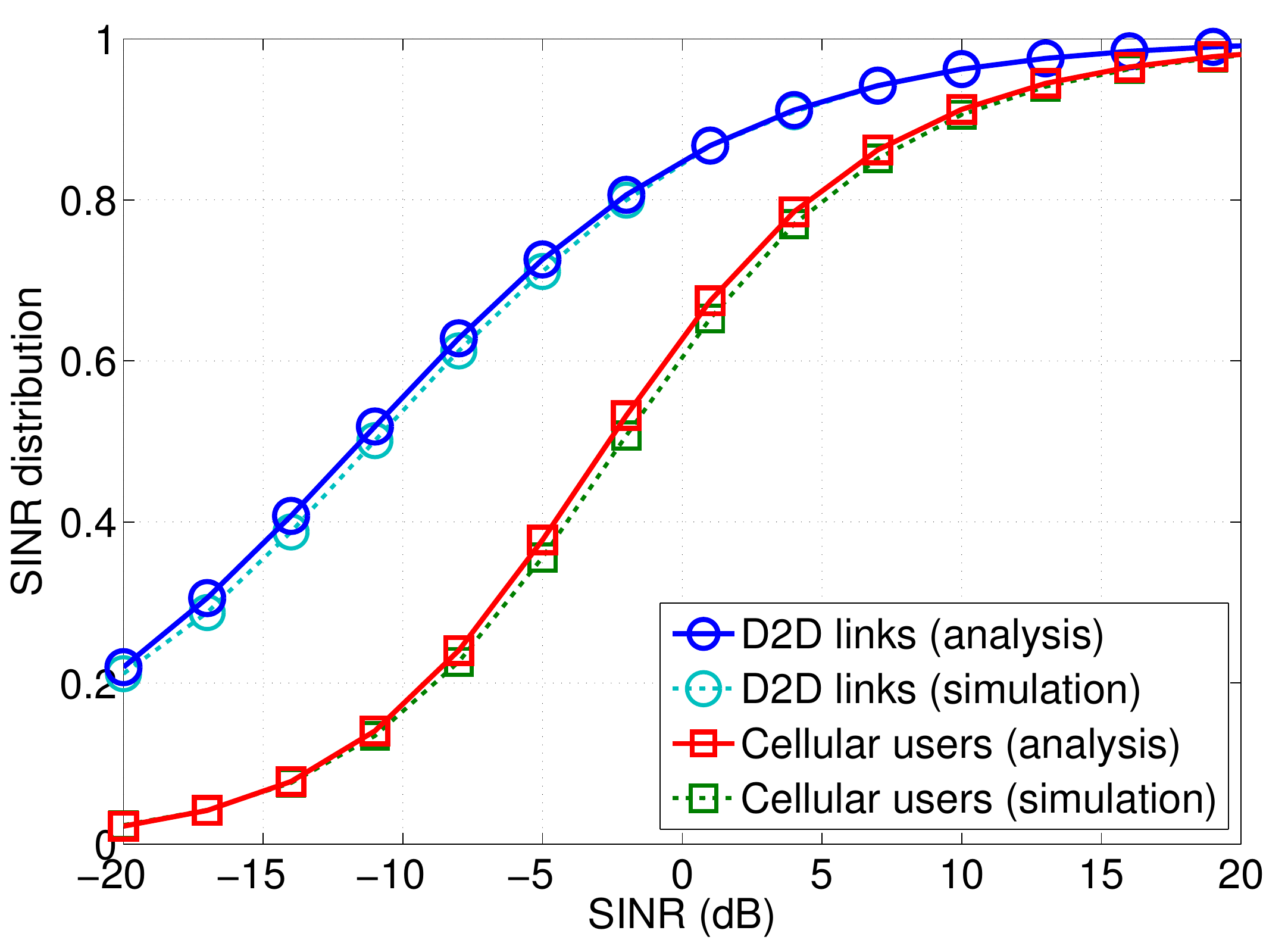}
\caption{The SINR CDFs of D2D links and cellular users in the shared network. The hopping probabilities are $p_{t_1}=p_{t_2}=1$, $p_{f_1}=0.1$ and $p_{f_2}=0.3$. The theoretical results are almost the same as the simulation result.}
\label{fig:underlaysinr}
\end{figure}

We validate the anaytical results of rates in Figs. \ref{fig:rateoverlay} and \ref{fig:rateunderlay}. We fix the ratio of D2D density to cellular user density (e.g.,  $1/2$), and increase these two densities proportionally. The analytical results are almost the same as the simulation results. 
The average rates of both cellular and potential D2D links decrease as the density increases, due to the decreasing available resources per link, as well as the increasing interference. Comparing the dedicated and shared networks, the D2D links have much higher average rate in the dedicated network. This implies that in a hybrid network sharing downlink resources, the interference from BSs may significantly degrade the network performance.
We can observe that the D2D rates in shared networks first decrease very fast, and then much more slowly when the BSs become fully loaded.  Indeed, when BSs are lightly loaded, the interference from BSs increases as the user density increases, which makes the D2D SINR decrease. In the fully loaded case, the interference from BSs stays almost the same. Though the interference from other D2D links increases, the decrease of D2D rate slows down, which implies that the interference from BSs is dominant in the performance of D2D links.
 Though the lower bound of rates are not very tight, the shapes are almost the same as the exact simulated rates, providing possibilities for optimization in terms of simple closed-form lower bounds. We compare the performance of exact rates and their lower bounds in the following subsection.

\begin{figure}
\centering
\includegraphics[width=7.5cm, height=6cm]{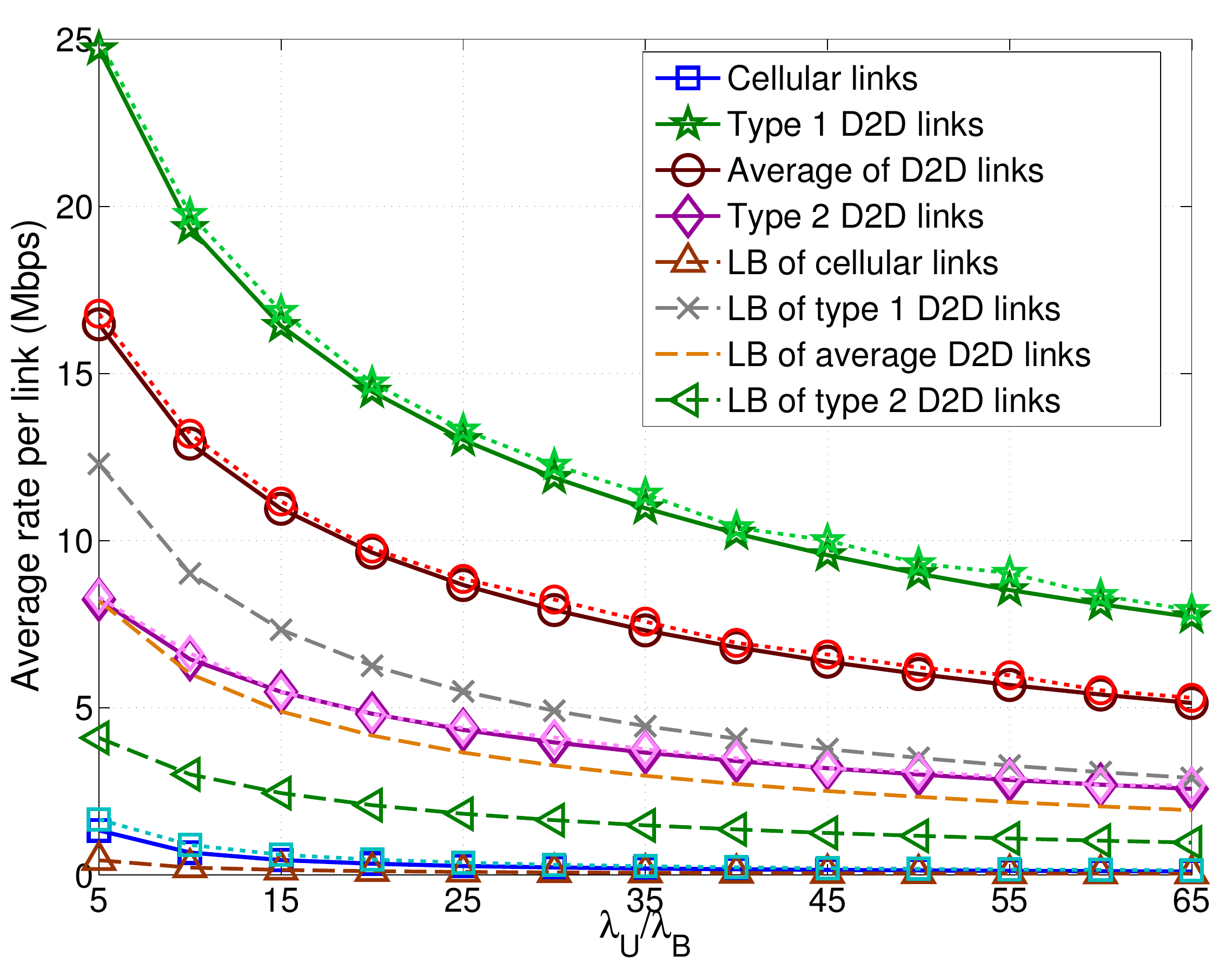}
\caption{The average rates vs. the density of users in the dedicated network ($\theta=0.5$). The hopping probabilities are $p_{t_1}=p_{t_2}=1$, $p_{f_1}=0.2$ and $p_{f_2}=0.6$. The density of potential D2D links increases proportionally to the density of cellular users. The dashed lines are the simulation results while the solid lines are the corresponding analytical results. The lower bound of rates are not very tight, but the shapes are almost the same as the exact rates.}
\label{fig:rateoverlay}
\end{figure}

\begin{figure}
\centering
\includegraphics[width=8cm, height=6cm]{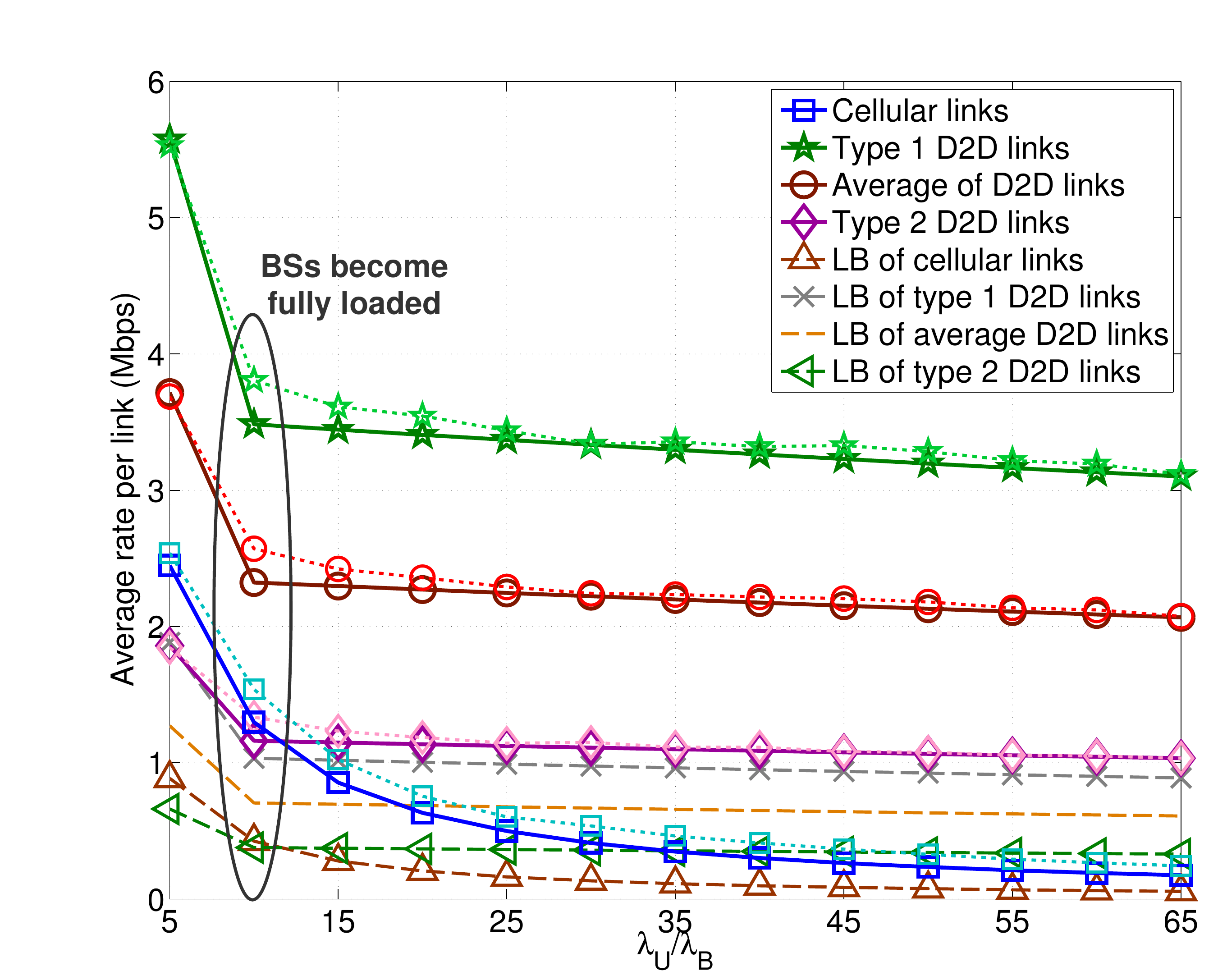}
\caption{The average rates vs. the density of users in the shared network. The hopping probabilities are $p_{t_1}=p_{t_2}=1$, $p_{f_1}=0.1$ and $p_{f_2}=0.3$. The dashed lines are the simulation results while the solid lines are the corresponding analytical results. Similar to dedicated network, the lower bound of rates are not very tight, but the shapes are almost the same as the exact rates.}
\label{fig:rateunderlay}
\end{figure}

\vspace{-0.4cm}
\subsection{Optimization of Network Performance}
The variation of rate density with time and frequency hopping probabilities in a heavily loaded network are shown in Figs. \ref{fig:drate-pt-60} and \ref{fig:drate-pf-60}, respectively. As we can observe, the optimal hopping probabilities to maximize rate lower bounds are the same as the ones to maximize the exact rates.
To maximize rate density,  the active D2D links access the frequency resource according to their service demands in both  dedicated and shared networks (i.e.,~$p_{f_i}^*=\min\{1, b_{D_i}/B_C\}$). All the potential D2D traffic is transmitted directly by D2D to alleviate the heavy load situation in the cellular network, and thus to maximize the total rate. Note that the optimal mode selection may be different for other objective functions. As shown in these two figures, the overall rate with dedicated allocation is greater than shared allocation in heavily loaded networks. One possible reason is that the rate of D2D links in the dedicated network is much larger than in the shared network, where the interference from BSs may limit the network performance, as it is observed in Figs.~\ref{fig:rateoverlay} and \ref{fig:rateunderlay}. By appropriately allocating resources between D2D  and cellular networks, the active D2D links can get a quite large rate compared to cellular UEs. However, the shared network may overwhelm the dedicated network without optimal resource partition, which is investigated in Fig. \ref{fig:drate-theta}. 


\begin{figure}
\centering
\includegraphics[width=8cm, height=6cm]{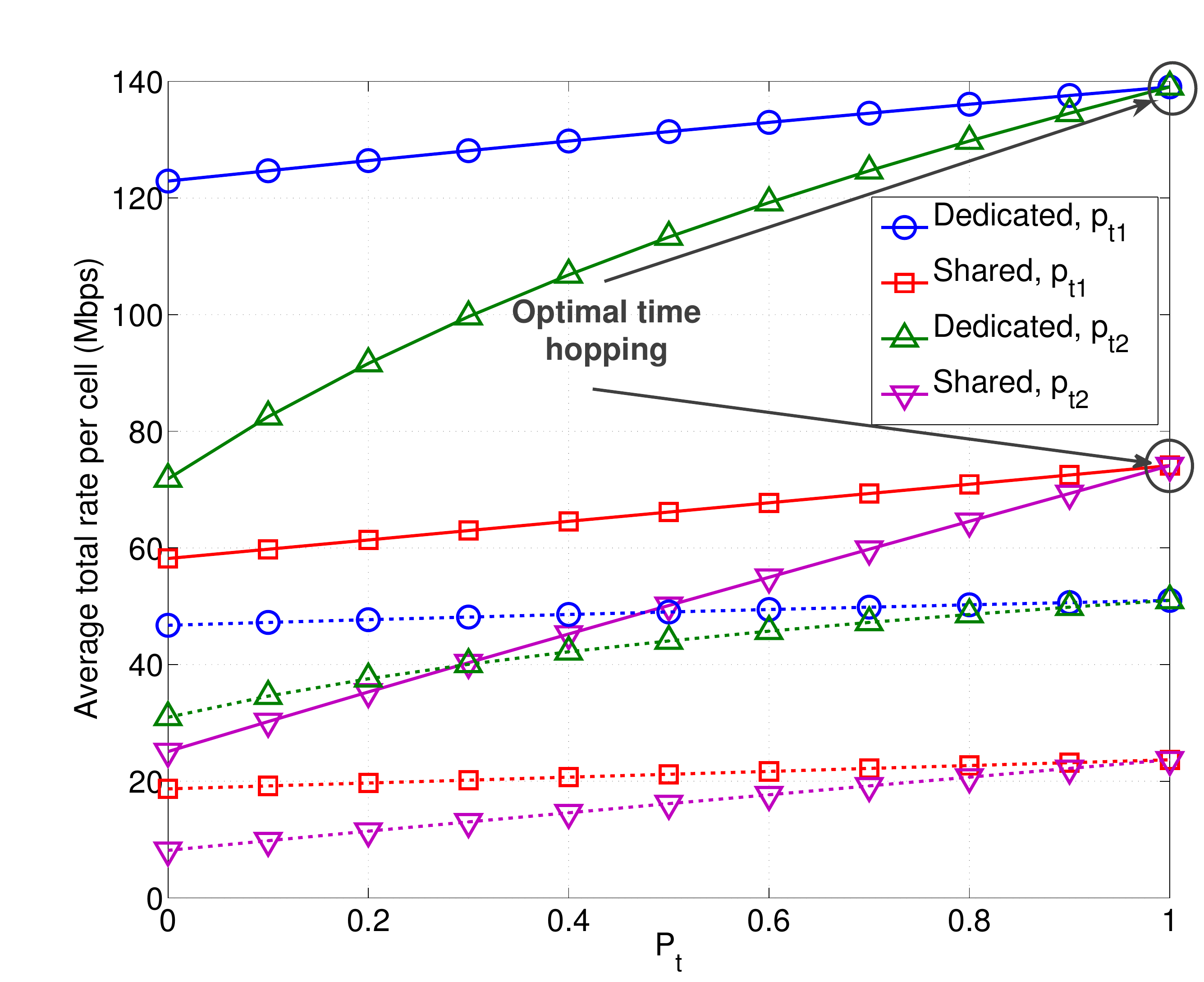}
\caption{Effect of time hopping probabilities on the total rate density in heavily loaded networks ($\theta=0.5$). The solid curves and dashed curves show the performance of exact rates and their lower bounds, respectively. We let frequency hopping probabilities be $p_{f_i}=\min\{1, B_C/b_{D_i}\}$.
The optimal time hopping probabilities to maximize the total average rate are the same as our analytical solutions (i.e. $p_{t_1}^*=p_{t_2}^*=1$). Though the rate lower bounds are not very tight, the time hopping probabilities to maximize the rate lower bounds and the exact rate are the same.}
\label{fig:drate-pt-60}
\end{figure}

\begin{figure}
\centering
\includegraphics[width=8cm, height=6cm]{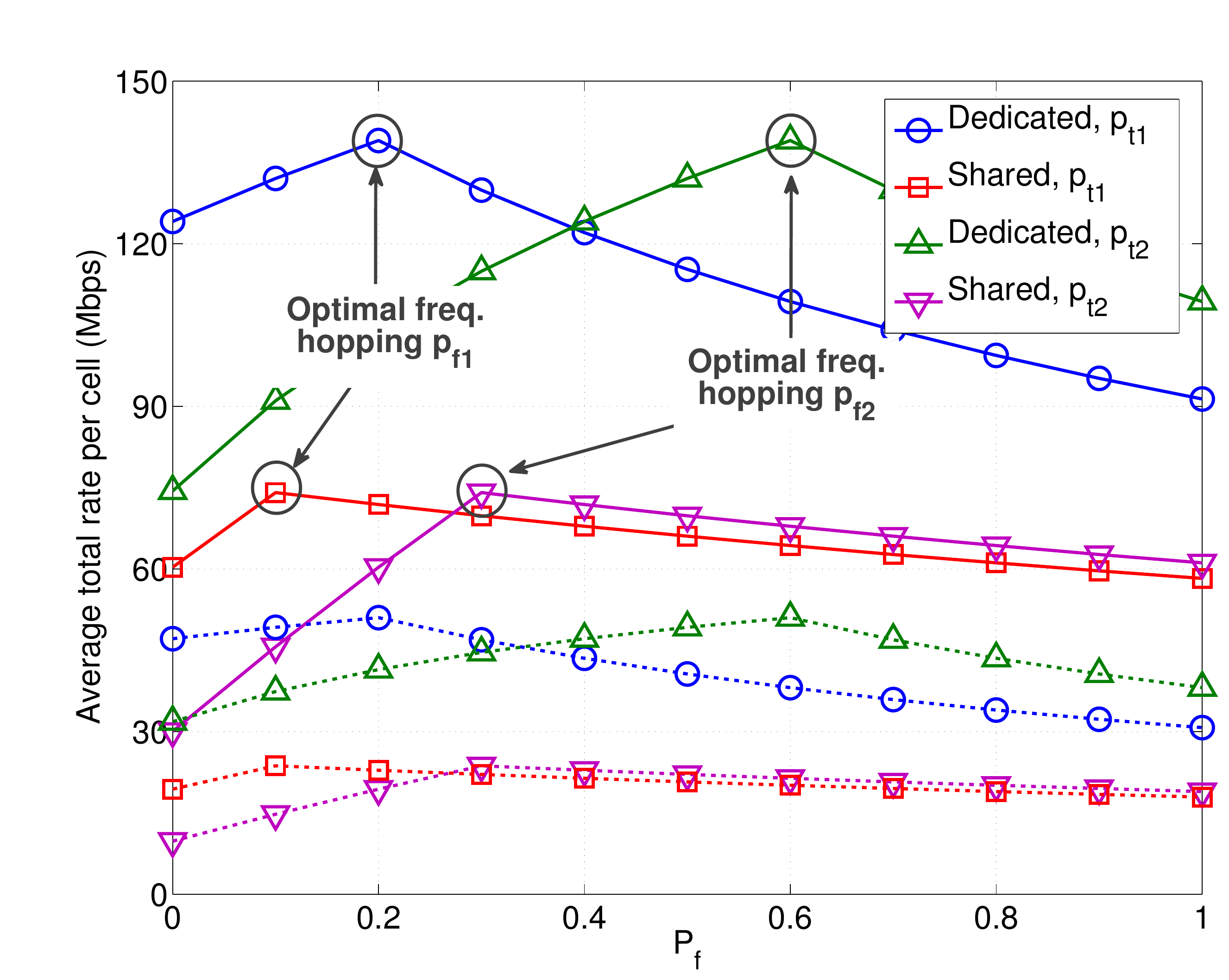}
\caption{Effect of frequency hopping probabilities  on the total rate density in heavily loaded networks ($\theta=0.5$). The solid curves and dashed curves show the performance of exact rates and their lower bounds, respectively. We let the time hopping probabilities be $p_{t_i}=1$. The optimal frequency hopping probabilities are the same as our analytical solutions (i.e. $p_{f_i}^*=\min\{1, B_C/b_{D_i}\}$).}
\label{fig:drate-pf-60}
\end{figure}

Though in the setting of this paper, the dedicated network have a greater rate  than the shared network, the conclusion differs in different scenarios. For example, when we add an additional condition for guaranteeing the cellular network performance in the dedicated network (e.g., $\theta\leq 0.1$), the optimal total rate in the shared network would be greater than the total rate in the dedicated network, which is illustrated in Fig. \ref{fig:drate-theta}. Another example is the network with a small $\lambda_D$, which may have a better performance using the shared approach (e.g., with $\lambda_D=0.1\lambda_B$, the total rates per cell of dedicated and shared networks are 9.6 and  13.6 Mbps, receptively). 
Therefore, there is no absolute advantage for the spectrum allocation approaches in general settings.

\begin{figure}
\centering
\includegraphics[width=8cm, height=6cm]{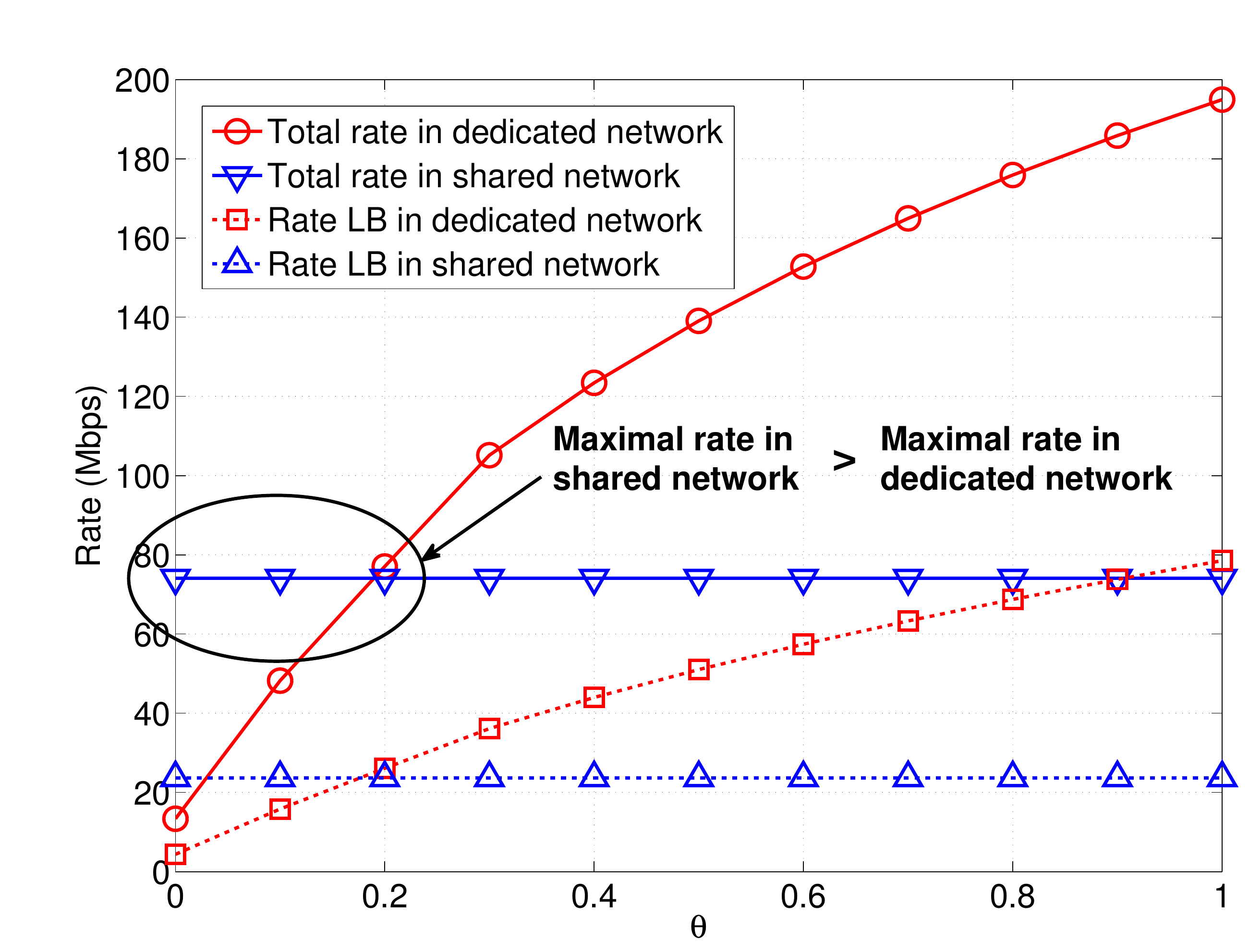}
\caption{Total rate versus $\theta$. The solid curves and dashed curves show the performance of exact rates and their lower bounds, respectively.  We let frequency hopping probabilities be $p_{f_i}=\min\{1, B_C/b_{D_i}\}$ and the time hopping probabilities be $p_{t_i}=1$. The total average rate of the shared network is greater than the dedicated network when $\theta$ is small.}
\label{fig:drate-theta}
\end{figure}

Fig. \ref{fig:drate-theta} also shows that the optimal resource partition in our simulation setup to maximize the total rate is $\theta^*=1$. We can have  different $\theta^*$ if the system parameters change. For example, Fig. \ref{fig:drate-theta-delta} shows that $\theta^*=\frac{b_{D_2}}{B}$ when the average distance between the D2D transmitter and its receiver increases to~280m. This is consistent with the conclusion made in Proposition \ref{prop:theta}, where we claim that $\theta^*$ depends on various network parameters (e.g., $\delta$) and  belongs to the set $\mathcal{O}$. Note that we get the solution $\theta^*=0$ or $1$, which is unfair,  due to that we consider total rate maximization as our objective function for the first-cut study. The optimal value of $\theta$ would be very different if other utility functions are considered. For example, for maximization of log-rate, we will never get $\theta^*=0$ or $1$ (techniques similar to \cite{lin2013uplinkD2D} can be used for this analysis).  We leave the investigation of other utility functions to future work.

\begin{figure}
\centering
\includegraphics[width=8cm, height=6cm]{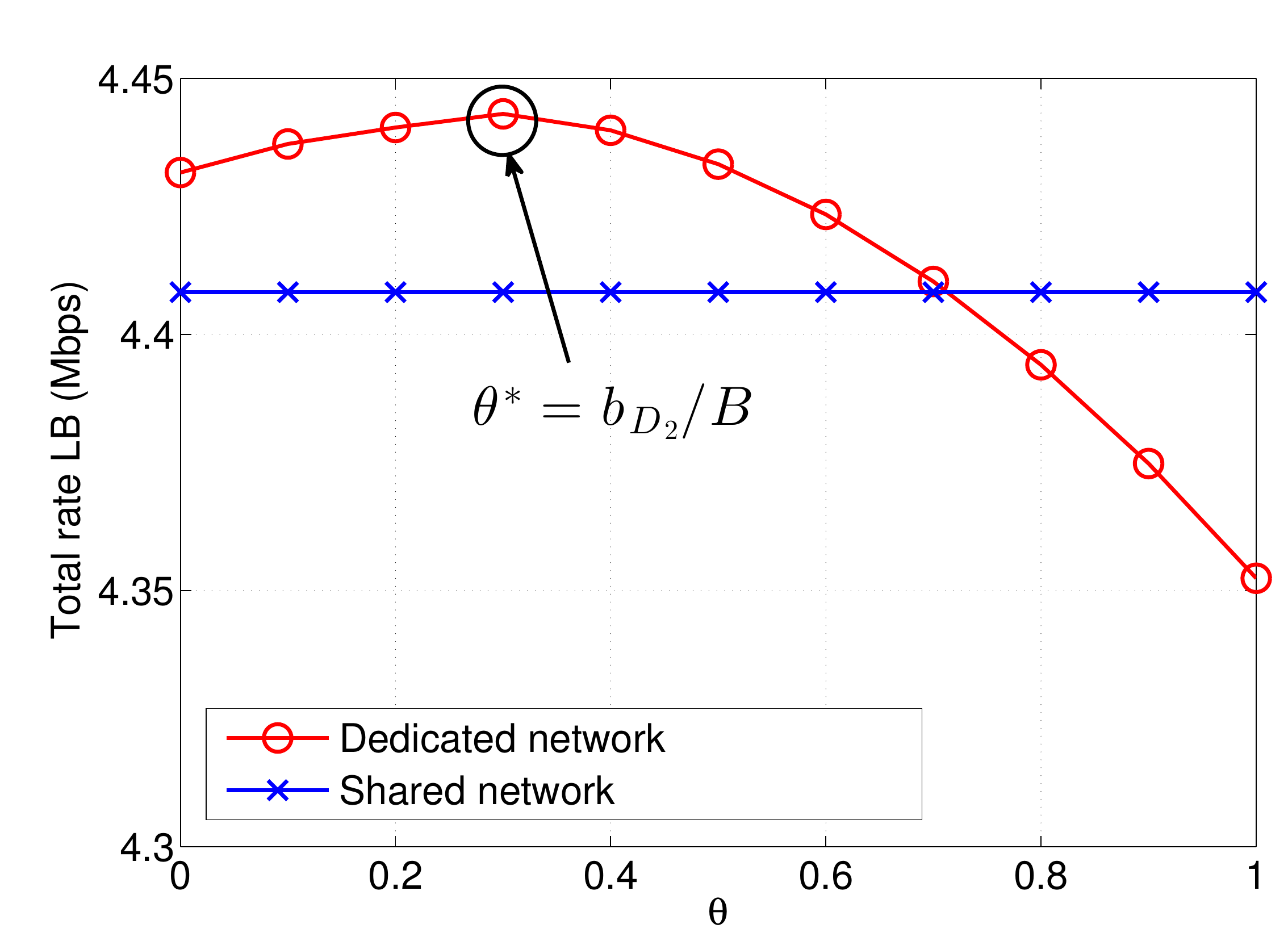}
\caption{Rate versus $\theta$ in a network with the average distance between a D2D transmitter and its receiver being 280m.  We let frequency hopping probabilities be $p_{f_i}=\min\{1, B_C/b_{D_i}\}$ and the time hopping probabilities be $p_{t_i}=1$. The optimal $\theta^*$ depends on the network parameters. The simulation result is consistent with the conclusion in Proposition \ref{prop:theta}.}
\label{fig:drate-theta-delta}
\end{figure}

\begin{figure}
\centering
\includegraphics[width=8cm, height=6cm]{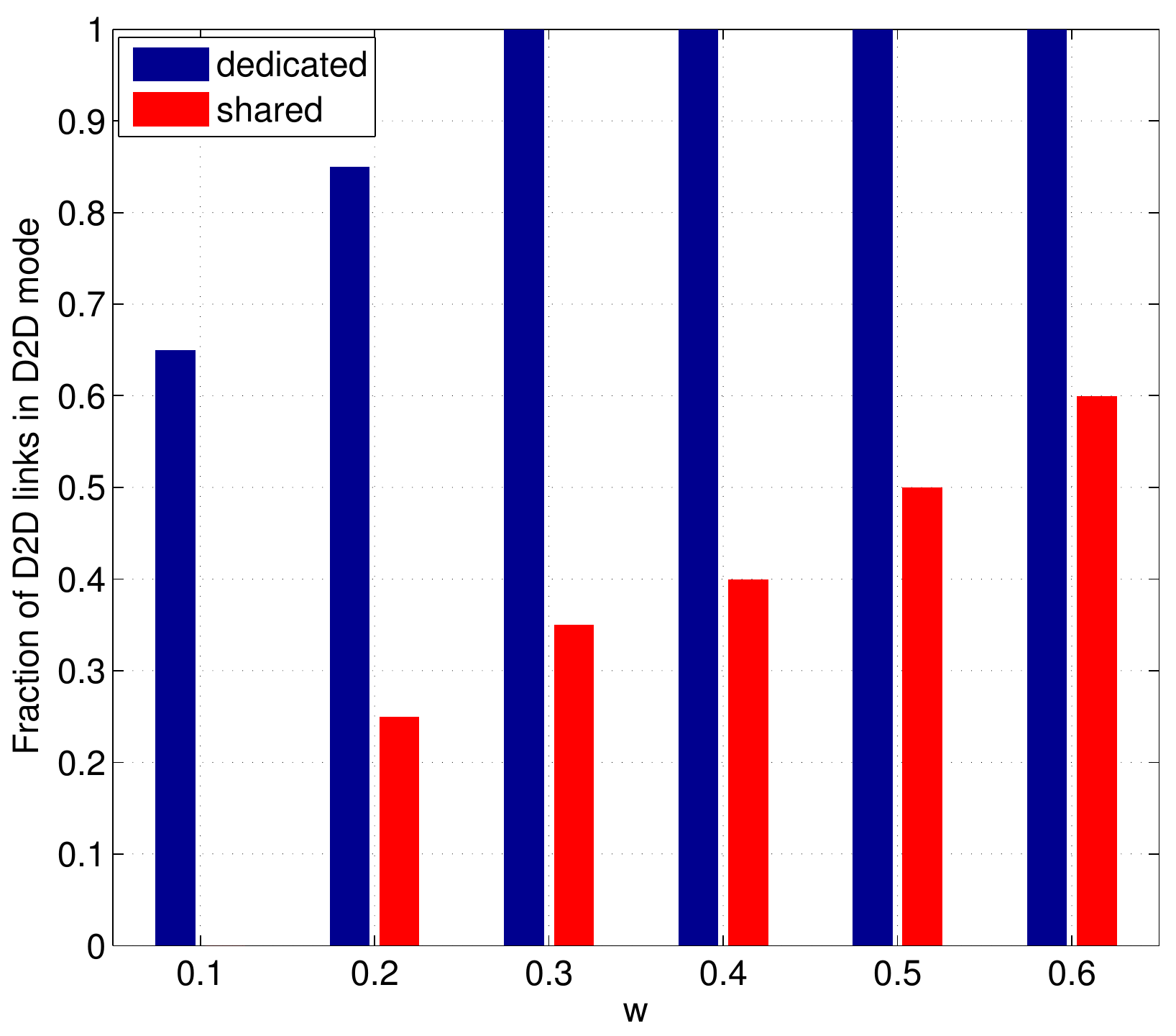}
\caption{Effect of parameter $w$ on the optimal mode selection to maximize the total rate. As $w$ increases, the number of links in D2D mode increases. All potential D2D links would be in D2D mode when $w\geq 1$.}
\label{fig:w}
\end{figure}

Though in most cases, we have $w\geq 1$, we investigate the impact of $w$ ($w>0$ for more general cases) on the mode selection in Fig. \ref{fig:w}. As  $w$ increases, which can be interpreted as the increasing price of cellular resource, the cellular communication becomes more and more unattractive for potential D2D traffic, and thus the load is shifted from cellular networks to D2D networks, in order to maximize the total rate. Therefore, it is possible to extend current framework to a system, which can dynamically control $w$ so as to adjust the load in D2D and cellular systems to  achieve other more general utilities (e.g., utilities involving fairness). We leave the analysis to future work.

\section{Conclusion}\label{sec:conclusion}
This paper has presented tractable frameworks for both dedicated and shared networks, which provide accurate expressions for important performance metrics (i.e., the coverage probability and average rate). With an appropriate resource partitioning, we observe that the dedicated network has a larger overall rate than shared network in downlink scenario. In dedicated network, the D2D links would access the frequency bands as many as needed (i.e., $p_{f_i}^*=\min\{1, b_{D_i}/(\theta B)\}$) to maximize any non-decreasing utility function. To maximize the total rate, the potential D2D links are all in D2D mode in both fully loaded dedicated and shared networks, when $w\geq 1$. 

There are numerous extensions of the proposed flexible model, like multiple antennas, power control, interference cancellation or interference alignment, more intelligent scheduling schemes and study of other utility functions. For example, one possible extension is to use the SIR-based CSMA protocol~\cite{BacLi11}. Though the set of active D2D links is no longer a homogeneous PPP, we can approximate it to a PPP with appropriate density, at little cost of accuracy. Then we can use the proposed model in this paper to analyze the network performance. Another possible extension of the proposed framework is to model BSs as other point processes, e.g., Matern hard core process (MHC), which characterizes the repulsiveness of BSs \cite{YeAndMHC13}. Analysis on the effect of $w$ on load balancing is also of interest.



\appendices
\section{Proof of Proposition \ref{theo:cdf-d2d}}\label{pf:cdf-d2d}
Conditioning on the distance between a typical transmitter and its receiver, we have
\begin{equation*}
\begin{aligned}
\mathbb{P}\left(\sinr >\beta\mid v\right)&=\mathbb{P}\left(h_0>s(I_{\tphi_D}+\sigma^2) \mid v \right)\\
&\overset{(a)}{=}\mathbb{E}_{I_{\tphi_D}}\left[ \exp\left(-s(I_{\tphi_D}+\sigma^2) \right) \right]\\
&= e^{-s\sigma^2}\mathcal{L}_{I_{\tphi_D}}(s),
\end{aligned}
\end{equation*}
where $s=\beta P_{D}^{-1}v^{\alpha}$, and $\mathcal{L}_{I_{\tphi_D}}(s)$ is the Laplace transform of random variable $I_{\tphi_D}$. The equality $(a)$ follows from  $h_0\sim \exp(1)$, and the last equality follows from the independence of noise and interference.

The Laplace transform can be further derived as follows:
\begin{equation*}
\begin{aligned}
&\mathcal{L}_{I_{\tphi_D}}(s)=\mathbb{E}\left[ \exp\left( -s\sum_{Z_i\in{\tphi_D\setminus 0}}P_{D}h_i|Z_i|^{-\alpha} \right) \right]\\
&\overset{(a)}{=} \exp\left( -2\tilde{\lambda}_{D}\int_0^\infty\int_0^\infty \left(1-e^{-sP_{D}h/u^{\alpha}}\right)F(dh)udu \right)\\
&=\exp\left( -2\pi \tilde{\lambda}_{D}\mathbb{E}_h \left[ \int_0^\infty  \left(1-e^{-sP_{D}h/u^{\alpha}}\right) udu\right]\right)\\
&\overset{(b)}{=}\exp\left( -\pi\tilde{\lambda}_{D} \mathbb{E}_h \left[ \Gamma\left( 1-\frac{2}{\alpha}\right)(shP_{D})^{\frac{2}{\alpha}} \right]\right)\\
&=\exp\left( - \tilde{\lambda}_{D} \frac{2\pi^2 /\alpha}{\sin\left(2\pi/\alpha\right)}(sP_{D})^{\frac{2}{\alpha}}  \right),
\end{aligned}
\end{equation*}
where $F(dh)$ is the law of channel fading (e.g., $F(dh)=e^{-h}dh$ in Rayleigh fading), and $\Gamma(x)=\int_0^\infty t^{x-1}e^{-t}dt$. The equality $(a)$ follows from the Slivnyak's Theorem of a PPP and the Laplace functional of a PPP \cite{Bac09,StoKen87},  $(b)$ is obtained by changing $x=\frac{shP_{D}}{r^{\alpha}}$, and the last equality follows from the Rayleigh fading assumption. Then we complete the proof by deconditioning on $v$.


\section{Proof of Corollary \ref{cor:cdf-cd-spec}}\label{pf-cor:cdf-cd-spec}
In this special case, for D2D links, we have
\begin{equation*}
\begin{aligned}
\mathbb{P}_D(\beta)=&\int_0^\infty\exp\left(-\beta P_{D}^{-1} \sigma^2 v^{\alpha}  - \tilde{\lambda_D} \frac{2\pi^2 /\alpha\beta^{\frac{2}{\alpha}}v^2}{\sin\left(2\pi/\alpha\right)} \right.\\
& \left.-  2\pi\lambda_B  H_0(\beta, \alpha) v^2-\frac{v^2}{2\delta^2}\right) \frac{v}{\delta^2} dv\\
=&\frac{1}{ 2\delta^2 \tilde{\lambda_D}\frac{2\pi^2 /\alpha}{\sin\left(2\pi/\alpha\right)}\beta^{\frac{2}{\alpha}}  +4\delta^2\pi\lambda_B  H_0(\beta, \alpha)+1},
\end{aligned}
\end{equation*}
where the last equality is obtained by letting $x=v^2$ and calculating the integral over $x$.


As for the cellular users, according to (\ref{eq:cdf-cd-cell}), we have 
\begin{equation*}
\begin{aligned}
\mathbb{P}_C(\beta)=&\int_0^\infty \exp\left(  - \tilde{\lambda_D} \frac{2\pi^2 /\alpha}{\sin\left(2\pi/\alpha\right)}(\beta \frac{P_D}{P_B})^{\frac{2}{\alpha}} r^2 \right.\\
&\left. - 2\pi\lambda_B H_1(\beta,\alpha) r^2 -\lambda_B\pi r^2  \right) 2\lambda_B\pi r dr\\
=&\frac{1 }{\frac{ \tilde{\lambda_D}}{\lambda_B} \frac{2\pi /\alpha}{\sin\left(2\pi/\alpha\right)}(\beta \frac{P_D}{P_B})^{\frac{2}{\alpha}} + 2 H_1(\beta,\alpha) +1}.
\end{aligned}
\end{equation*}

\section{Proof of Proposition \ref{prop:drate-overlay}}\label{pf-prop:drate-overlay}
Plugging $p_{f_i}^*$ to (\ref{eq:drate-overlay}), the average rate of active D2D links in (\ref{eq:drate-overlay}) is non-decreasing with respect to $p_{t_i}$. Denoting the average rate of cellular users by $g(p_{t_i})$, we have
\begin{equation*}
\begin{aligned}
g= & \frac{7(1-\theta)B\lambda_B}{9w}\frac{ \log_2(1+\beta_C)}{2 H_1(\beta_C, \alpha) + 1} \\
&\times \frac{\left(\sum_{i=1}^{M}\lambda_{D_i}b_{D_j}(1-p_{t_j})+w\lambda_Ub_C\right)}{\left(\sum_{i=1}^{M}\lambda_{D_i}b_{D_i}(1-p_{t_i})+\lambda_Ub_C\right)},
\end{aligned}
\end{equation*}
whose first derivative with respect to $p_{t_k}$ is
\begin{equation*}
\begin{aligned}
&\frac{\partial g}{\partial p_{t_k}}=\frac{7(1-\theta)B\lambda_B}{9w}\frac{ \log_2(1+\beta_C)}{2 H_1(\beta_C, \alpha) + 1}\\ 
&\times \frac{\lambda_{D_k}b_{D_k}\lambda_Ub_C(w-1)}{\left(\sum_{i=1}^{M}\lambda_{D_i}b_{D_i}(1-p_{t_i})+\lambda_Ub_C\right)^2}\geq 0,
\end{aligned}
\end{equation*}
where the last inequality follows from the assumption that $w\geq 1$ for congested networks. Therefore, the rate density is a non-decreasing function of $p_{t_i}$ and thus $p_{t_i}^*=1$.

As for a lightly loaded network, with $w$ being very small, the second term dominates the rate density, which is non-increasing with respect to $p_{t_i}$ when $w\rightarrow 0$. Therefore, $p_{t_i}^*=0$.

\section{Proof of Proposition \ref{prop:theta}}\label{pf-prop:theta}
Denote $\beta_C=\arg\max_\beta R_{Cl}^{(O)}$ and $\beta_D=\arg\max_\beta R_{Dl}^{(O)}$. Plugging  $p_{t_i}^*=1$ to (\ref{eq:drate-overlay}),  the objective function becomes
\begin{equation}\label{eq:obj-theta}
\max_\theta \ \frac{\sum_i \lambda_{D_i}p_{f_i}^*\theta B\log_2(\beta_D+1)}{1+2\delta^2 \sum_i \lambda_{D_i}p_{f_i}^* \frac{2\pi^2 /\alpha}{\sin\left(2\pi/\alpha\right)}\beta^{\frac{2}{\alpha}}} - \frac{7B \lambda_B\log_2(\beta_C+1)}{9\left(2  H_1(\beta_C, \alpha) + 1\right)}\theta.
\end{equation}
Recall that we denote $A_i=\sum_{j\in\mathcal{S}_i} \lambda_{D_j} \tilde{b}_j B\log_2(\beta_D+1)$, $C_i=2\delta^2 \sum_{j\in\mathcal{S}_i} \lambda_{D_j} \tilde{b}_j \frac{2\pi^2 /\alpha}{\sin\left(2\pi/\alpha\right)}\beta^{\frac{2}{\alpha}}$, $D= \frac{7B \lambda_B}{9}\frac{\log_2(\beta_C+1)}{2  H_1(\beta_C, \alpha) + 1}$, $E_i=\sum_{j\in\mathcal{G}_i} \lambda_{D_j} B\log_2(\beta_D+1)$, and $F_i=2\delta^2 \sum_{j\in\mathcal{G}_i}  \lambda_{D_j} \frac{2\pi^2 /\alpha}{\sin\left(2\pi/\alpha\right)}\beta^{\frac{2}{\alpha}} + 1$. 
On the $i$th region, the first term of (\ref{eq:obj-theta}) can be written as 
\begin{equation*}
\max_\theta \ \frac{E_i\theta^2 +A_i\theta}{F_i\theta+C_i}.\vspace{-0.2cm}
\end{equation*}
The objective function is thus $\frac{E_i\theta^2 +A_i\theta}{F_i\theta+C_i} - D\theta$. The first derivative of~(\ref{eq:obj-theta}) is $\frac{E_iF_i\theta^2 +2C_iE_i\theta +A_iC_i}{(C_i+F_i\theta)^2}-D$ and the second derivative is $\frac{2C_i(E_iC_i-A_iF_i)}{(C_i+F_i\theta)^3}$. Note that $C_iE_i = A_i(F_i-1)$, i.e., $C_iE_i<A_iF_i$. We consider the following two cases. 

\noindent (1) For partitions with  $E_i\geq F_iD$, the first derivative of~(\ref{eq:obj-theta}) is non-negative, and thus the objective function is non-decreasing. In this case, we have $\theta^*={\tilde{b}_{i+1}}$. Note that $E_i-F_iD$ decreases as $i$ increases. Denoting the index of the last domain that satisfies $E_i\geq F_iD$ by $k$, the objective function keeps increasing over the first $k$ partitions, and thus $\theta^*={\tilde{b}_{k+1}}$ for the first $k$ partitions. 

\noindent (2) For the partitions with  $E_i<F_iD$, we have a positive second derivative, implying that the objective function is concave. Thus, the optimal solution in the latter case is $\left[\frac{1}{F_i}\left(\sqrt{\frac{C_i(A_iF_i-E_iC_i)}{D F_i-E_i}}-C_i\right)\right]_{\tilde{b}_i}^{\min\{1,\tilde{b}_{i+1}\}}$, where $[x]_a^b$ denotes $\min\{\max\{x,a\},b\}$. 

\noindent Combining the above two cases,  the proof is complete.

\section{Proof of Proposition \ref{prop:drate-underlay}}\label{pf-prop:drate-underlay}

When $p_{f_i}B>b_{D_i}$, the objective function is non-increasing with respect to $p_{f_i}$, and thus we have $p_{f_i}^*\leq b_{D_i}/B$. Observing that $p_{f_i}$ only appears in terms of $p_{t_i}p_{f_i}$, we change variable to $x_i=p_{t_i}p_{f_i}$. The rate density maximization problem becomes 
\begin{equation}\label{eq:opt-utility-underlay-x}
\begin{aligned}
\max\limits_{p_t, x} \quad & d_{\rt}^{(S)}(x_i, p_{t_i})\\
\text{s.t. } \quad & x_i\leq b_{D_i}/B, \\
\quad & x_i \leq p_{t_i} \leq 1,\forall i\in \Phi_D.
\end{aligned}
\end{equation}
The objective function (\ref{eq:drate-underlay}) is
\begin{equation*}
\begin{aligned}
&d_{\rt}^{(S)}(x_i, p_{t_i})\\
=&\frac{\sum_{j=1}^{M}x_{j}\lambda_{D_j} B\log_2(1+\beta_D)}{   \frac{2\pi^2 /\alpha 2\delta^2\beta_D^{\frac{2}{\alpha}}}{\sin\left(2\pi/\alpha\right)}\sum_{i=1}^{M}x_{i}\lambda_{D_i}  +4\delta^2\pi\lambda_B  H_0(\beta_D, \alpha)+1} \\
&+p_a^{(S)}\frac{\left(\sum_{j=1}^{M}\frac{b_{D_j}}{w}(1-p_{t_j})\lambda_{D_j} + b_C \lambda_U \right)\log_2(1+\beta_C)}{ \frac{\sum_{i=1}^{M}x_{i}\lambda_{D_i} }{\lambda_B} \frac{2\pi /\alpha}{\sin\left(2\pi/\alpha\right)}(\beta_C \frac{P_D}{P_B})^{\frac{2}{\alpha}} + 2  H_1(\beta_C,\alpha) +1},
\end{aligned}
\end{equation*}
where the first term is independent of $p_{t_i}$, and the second term can be written as
\begin{equation}\label{eq:2nd-pt}
A\frac{\left(\sum_{j=1}^{M}b_{D_j}(1-p_{t_j})\lambda_{D_j} + wb_C \lambda_U \right)}{\left(b_C\lambda_U + \sum_{i=1}^{M} b_{D_i}\lambda_{D_i}(1-p_{t_i})\right)},
\end{equation}
where $A=\frac{\frac{ 7B\lambda_B}{9w}}{\frac{\sum_{i=1}^{M}x_{i}\lambda_{D_i} }{\lambda_B} \frac{2\pi /\alpha}{\sin\left(2\pi/\alpha\right)}(\beta_C \frac{P_D}{P_B})^{\frac{2}{\alpha}} + 2  H_1(\beta_C,\alpha) +1}>0$. The first derivative of (\ref{eq:2nd-pt}) with respect to $p_{t_i}$ is
\begin{equation*}
A \frac{b_C\lambda_Ub_{D_i}\lambda_{D_i} (w-1)}{\left(b_C\lambda_U + \sum_{i=1}^{M} b_{D_i}\lambda_{D_i}(1-p_{t_i})\right) ^2},
\end{equation*}
which is non-negative when $w\geq 1$. Therefore, given $w\geq 1$, the objective function (\ref{eq:drate-underlay}) is a non-decreasing function of $p_{t_i}$, and we have $p_{t_i}^*=1$.

In a lightly loaded network with small $w$, similar to the proof in Appendix \ref{pf-prop:drate-overlay}, we have $p_{t_i}^*=0$.


\bibliographystyle{ieeetr}
\bibliography{flashlinqbib}

\end{document}